\theoremstyle{plain}
\newtheorem{theorem}{Theorem}
\theoremstyle{definition}
\newcommand{\ket}[1]{|#1\rangle}
\newcommand{\bra}[1]{\langle #1|}
\newcommand{\sccmacro}{\mathbf{\mathcal{S}}}
\newcommand{\acr}{LACE}
\begin{document}
\tikzset{ 
    table/.style={
        matrix of nodes,
        row sep=-\pgflinewidth,
        column sep=-\pgflinewidth,
        nodes={
            rectangle,
            draw=black,
            align=center
        },
        minimum height=1.3em,
        text depth=0.45ex,
        text height=1.21ex,
        nodes in empty cells,
        every even row/.style={
            nodes={fill=gray!20}
        },
        column 1/.style={
            nodes={text width=3em,font=\bfseries}
        },
        row 1/.style={
            nodes={
                fill=black,
                text=white,
                font=\bfseries
            }
        }
    }
}

\title{Learning correlated noise in a 39-qubit quantum processor}

\author{Robin Harper}
\thanks{Corresponding author: robin.harper@sydney.edu.au}
\affiliation{Centre for Engineered Quantum Systems, School of Physics, University of Sydney, Sydney, NSW 2006 Australia}

\author{Steven T. Flammia}
\affiliation{AWS Center for Quantum Computing, Pasadena, CA, USA}
\affiliation{IQIM, California Institute of Technology, Pasadena, CA, USA}

\date{\today}

\begin{abstract}
Building error-corrected quantum computers relies crucially on measuring and modeling noise on candidate devices. 
In particular, optimal error correction requires knowing the noise that occurs in the device as it executes the circuits required for error correction. 
As devices increase in size we will become more reliant on efficient models of this noise.
However, such models must still retain the information required to optimize the algorithms used for error correction.
Here we propose a method of extracting detailed information of the noise in a device running syndrome extraction circuits. 
We introduce and execute an experiment on a superconducting device using 39 of its qubits in a surface code doing repeated rounds of syndrome extraction, but omitting the mid-circuit measurement and reset. 
We show how to extract from the 20 data qubits the information needed to build noise models of various sophistication in the form of graphical models. 
These models give efficient descriptions of noise in large-scale devices and are designed to illuminate the effectiveness  of error correction against correlated noise. 
Our estimates are furthermore precise: we learn a consistent global distribution where all one- and two-qubit error rates are known to a relative error of $\pm 0.1$\%. 
By extrapolating our experimentally learned noise models towards lower error rates, we demonstrate that accurate correlated noise models are increasingly important for successfully predicting sub-threshold behavior in quantum error correction experiments.
\end{abstract}
\maketitle
\section{Introduction}

In order to fully realise the potential of quantum devices one must execute many highly accurate quantum gates~\cite{Shor1997,Cao2019,Biamonte2017,AspuruGuzik2005,Gidney2021}. 
Current multi-qubit devices have single-qubit gates with average error rates around $10^{-3}$~\cite{Grzesiak2020,Huang2019,chen2022,ballance2016}; this is orders of magnitude too high to directly execute the number of operations required for computations such as integer factoring~\cite{Gidney2021}. 
Fault-tolerant quantum error correction overcomes this by trading more physical qubits for increased logical qubit fidelity~\cite{Knill1998a,Aharonov1997,Shor1997,Kitaev1997}. 
Although there are many proposed protocols for error correction, and many recent error correction experiments
for various small codes 
\cite{ryan-anderson2021,Egan2021,Krinner2022,sundaresan2022,zhao2022}, the most widely studied proposals are based on the surface code and variants thereof~\cite{Kitaev2003,BravyiKitaev1998,fowler2012,Tuckett2018,BonillaAtaides2021}. 
Recent work has shown a decreasing logical error rate in a variant of the surface code~\cite{BonillaAtaides2021} as the distance of the code increases~\cite{google2022}.

Error correction can be dramatically improved provided one knows certain details of the noise afflicting the device~\cite{Robertson2017,BonillaAtaides2021,Nickerson2019,Tuckett2019,Chubb2021,Chubb2021a,google2022} 
and one tailors the code and decoder to this noise~\cite{Tuckett2020,BonillaAtaides2021,iolius2022,tiurev2022,Higgott2022}. 
However, most existing noise characterization techniques fall into two categories: tomographic reconstruction or strong noise-averaging (e.g.\ randomized benchmarking). 
Those based on full tomographic noise reconstruction \cite{Chuang1997,Merkel2012,Blume-Kohout2016,evan2022} struggle to scale past a few qubits. 
Techniques that strongly average the noise~\cite{Emerson2007, Knill2008, Magesan2011, Carignan-Dugas2015, Hashagen2018, Erhard2019, Helsen2019, Proctor2018, proctor21b, hines2022} attempt to gain scalability but often fail to provide enough of a nuanced picture of the noise to allow diagnostic conclusions or to suggest improvements to the device.  We review these methods in \cref{sec:background}.

Fortunately, there are techniques that reconstruct efficient descriptions of detailed Pauli noise models for large-scale systems~\cite{Flammia2019,Harper2019}. 
These methods learn the noise in the form of a \textit{graphical model}, which is a model that is flexible enough to provide faithful descriptions of correlated noise while retaining the desirable properties of being efficient to learn and allowing the tailoring of codes and decoders to relevant features of the noise. 

An example of such a graphical model is an Ising model, where a probability distribution with $2^n$ outcomes --- the equilibrium Gibbs distribution --- is specified by a Hamiltonian and a temperature. 
The individual probabilities of such a model cannot be calculated without knowing the partition function (which is not generally efficient to compute). 
However, ratios of probabilities and certain conditional probabilities can be calculated efficiently, and they often enjoy efficient sampling algorithms (using e.g.\ Metropolis sampling). 
For quantum noise that is described by such a graphical model, these features are sufficient to enable detailed modeling and prediction, as well as optimizations such as the creation of minimum weight perfect matching decoders or tensor network decoders that are tailored to the underlying noise correlations. For instance, ref.~\cite{Robeva2018} explores the duality between such graphical models and tensor networks. Accordingly, such graphical models can directly instantiate tensor networks, allowing utilisation of the contraction methods currently proposed as a means of decoding syndrome measurements \cite{Chubb2021,Chubb2021a,huang2021efficient,Farrelly2022}.

Here we experimentally demonstrate learning a comprehensive description of the Pauli noise in a 39 qubit superconducting device running the circuits required to implement surface code quantum error correction, but without mid-circuit ancilla measurement and reset. 
The noise is learned in the form of a graphical model, meaning this description is efficient and contains a globally consistent description of the errors in the device, including larger scale correlations. 

Our reconstructed noise estimates are highly accurate. 
By way of example, a bootstrap analysis (at the $2\sigma$ level) shows a maximum \textit{relative} error of $\pm 0.1\%$ on both single-qubit error and two-qubit error rates. 
We fit several graphical models having increasing expressive power to these data to see how correlations affect quantum error correction. 
We measure the logical failure rate assuming a code capacity model (i.e., no measurement and reset errors) and utilising an approximate maximum likelihood decoder (a tensor network decoder~\cite{Bravyi2014} with bond dimension 8). 
Even in the absence of measurement and reset error, the circuit noise translated to a logical error rate of at least $0.166\pm0.004$, which is worse than the observed average single-qubit error rate of $0.136\pm 0.001$. 

Our reconstructed noise models allow us to address previously difficult-to-explore questions of interest for error correction in this system.
For example, given the observed correlations, how much lower should the single-qubit error rates be before the logical failure rate is below the single-qubit bare error rate? 
To address this, we parameterize our observed noise as being generated by a continuous time evolution evolving for a finite time $T$; this allows us to retain the observed correlation structure in a way that extrapolates smoothly to zero noise as $T\to 0$. 

Perhaps surprisingly, the simplest models (which ignore correlated errors) through to the most complex models all gave approximately the same pseudo-threshold (physical error rate = logical error rate $\approx 0.1$). 
However, simpler models underestimate the logical error rate compared with the global distribution, and give widely diverging predictions of the logical failure rate as the physical error rate extrapolates towards zero. 
For instance, with an average physical error rate of $0.031$, the simpler models predicted logical error rates of $0.006\pm0.001$, whereas the logical error rate from the extrapolated global probability distribution, was over twice as much ($0.0121\pm0.002$).
In this regime, models that capture correlated errors, such as an Ising model, gave logical error rates commensurate with or higher than the global distribution $0.014-0.018$. 

Our conclusion is that models capable of capturing correlated errors are indispensable in accurately estimating the expected logical failure rate of error correction protocols executed on this device. 
Our analysis shows that, once physical error rates are low enough such that error correction might be possible, models that fail to take into account correlated errors (such as those caused by crosstalk) can potentially underestimate expected logical failure rates by a significant fraction. 
However, models that can capture such crosstalk provide more reliable estimates of the performance of the error correcting circuitry as well as providing the more nuanced information required to write custom decoders.

\begin{figure}[ht!]
    \includegraphics[width=\columnwidth]{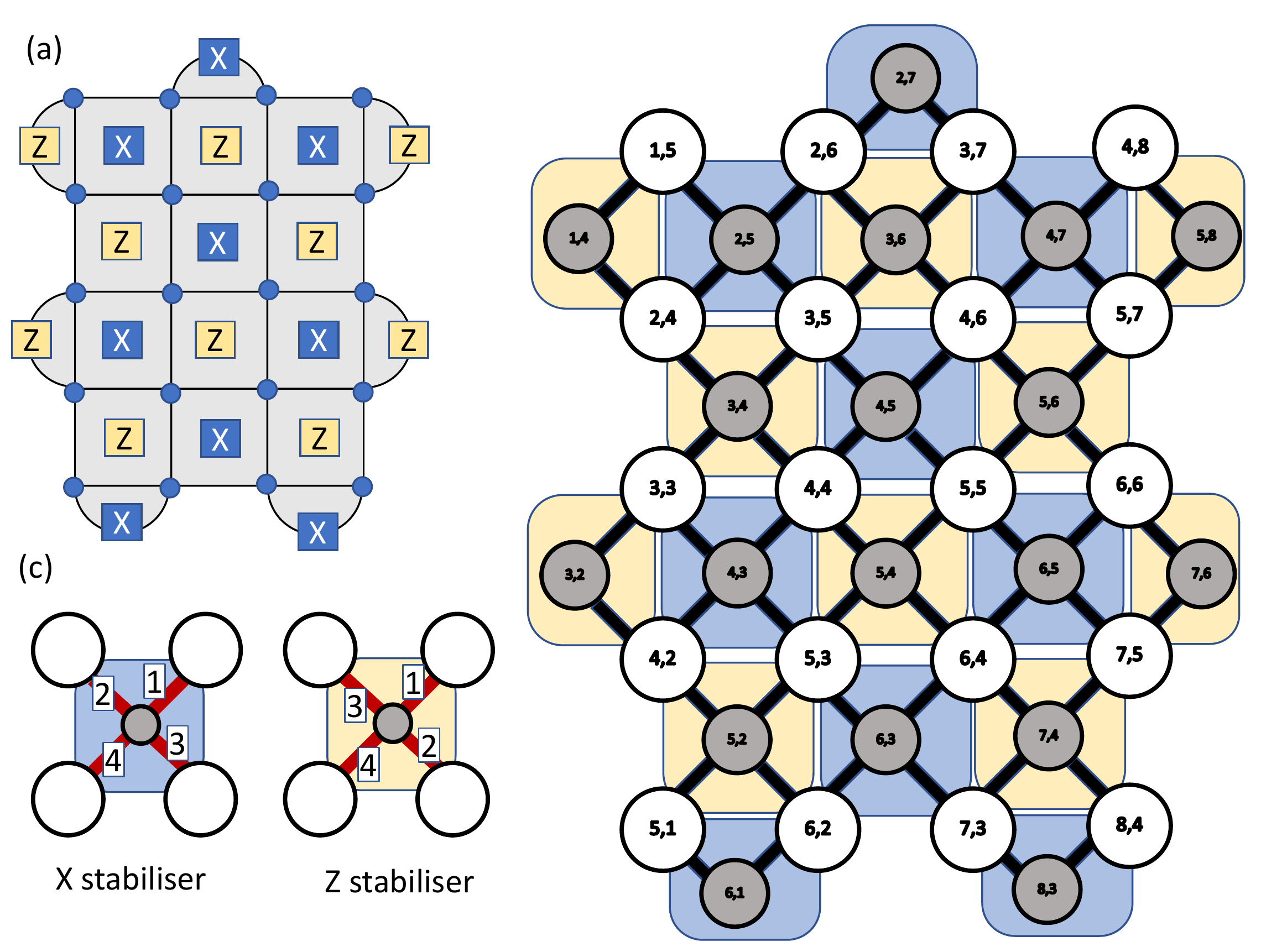}
    \caption{\textbf{(a)} Here we show the standard schematic of rotated surface code. 
    In this case we have a grid of 4x5 data qubits, where the data qubits live on the vertices of the code. 
    The faces represent stabilizer measurements, in this case Z and X stabilizers. 
    In the body of the code the stabilizers are weight four. 
    Boundary conditions are dealt with by smaller weight-two stabilizers as shown. 
    \textbf{(b)} The realisation of the code on the Sycamore device. 
    The numbers in the circles identify the location of the qubit on the device grid. 
    The ancilla qubits (in grey) reside in the centre of each face. The data qubits are shown in white. 
    The black lines represent the connections for two qubit gates that will be utilised to perform the circuits used to prepare the ancillas so they could be measured to perform the stabilizer measurements. 
    \textbf{(c)} In order to minimise the spread of errors the ancillas need to be coupled to the data qubits in a very specific pattern. 
    Here we show the timing pattern of two-qubit gate activation for the ancillas used as Z and X stabilizers. 
    In a complete stabilizer code implementation the ancillas would be measured and reset after the completion of the stabilizer preparation circuits.
    Here we do not do this ancilla measurement (see text). }
    \label{fig:surfacecode}
\end{figure}

\section{Surface code protocol}

The surface code is an error correcting code that appears to be particularly well suited for two-dimensional grids of superconducting qubits. 
\Cref{fig:surfacecode} shows a typical surface code layout and how the abstract code can be mapped to the grid-like qubit devices used by Google (see \cref{sec:devicedetails} for more details of the device). 
In this paper, we consider the standard (CSS) surface code with X- and Z-type stabilizer generators, although we note that more recent experiments on Google devices~\cite{google2022} used the closely related XZZX code instead~\cite{BonillaAtaides2021}. 
Our results could equally be applied in this case. 

Error correction is a technique with many moving parts, all of which have the potential of introducing noise into the system. 
Like any complex system, if all the parts are in action it can be difficult to diagnose the source of problems. 
For this reason, as well as technical limitations of the device, we focus on a simplified analog of error correction circuits. 
Specifically, we run the circuits required for non-demolition four-body stabilizer measurements of the data qubits, but without actually performing the ancilla measurements or the resets required in a real error correction experiment. 
We call these circuits \textit{stabilizer preparation}, since we only prepare to, but don't actually execute the measurement. Here we show how to use efficient protocols introduced in \cite{Flammia2019} and \cite{Harper2019} to extract comprehensive information about the noise in the device while the device is running the stabilizer preparation circuits.

One of the key realisations we utilize is that two rounds of stabilizer preparation of the surface code performs an identity channel on the data qubits. 
This might not be immediately obvious given the complexity of the interleaved two-qubit gates, but is easily verifiable using a Clifford circuit simulator. 
In \cref{sec:protocol} we detail two rounds of stabilizer preparations.

\begin{figure*}[t]
\begin{tikzpicture}[outer/.style={draw=gray,dashed}]
\node[rectangle,anchor=west](step1) at (-11,2) {(a) Step 1: Choose an $m$};
\node[rectangle,anchor=west] at (-11,1.2) {(b) Step 2: Randomise Gates};
\node[rectangle,anchor=west] at (-11,0.4) {(c) Step 3: Apply inverse};
\node[rectangle,anchor=west](step4) at (-11,-0.4) {(d) Step 4: Run circuit};
\node[rectangle,anchor=west](step5) at (-11,-1.2) {(e) Step 5: Repeat $1\dots 4$};
\node[rectangle,anchor=west](step6) at (-11,-2.0) {(f) Step 6: Post-process data};

\node(circ) at (0,0) {\includegraphics[width=0.65\linewidth]{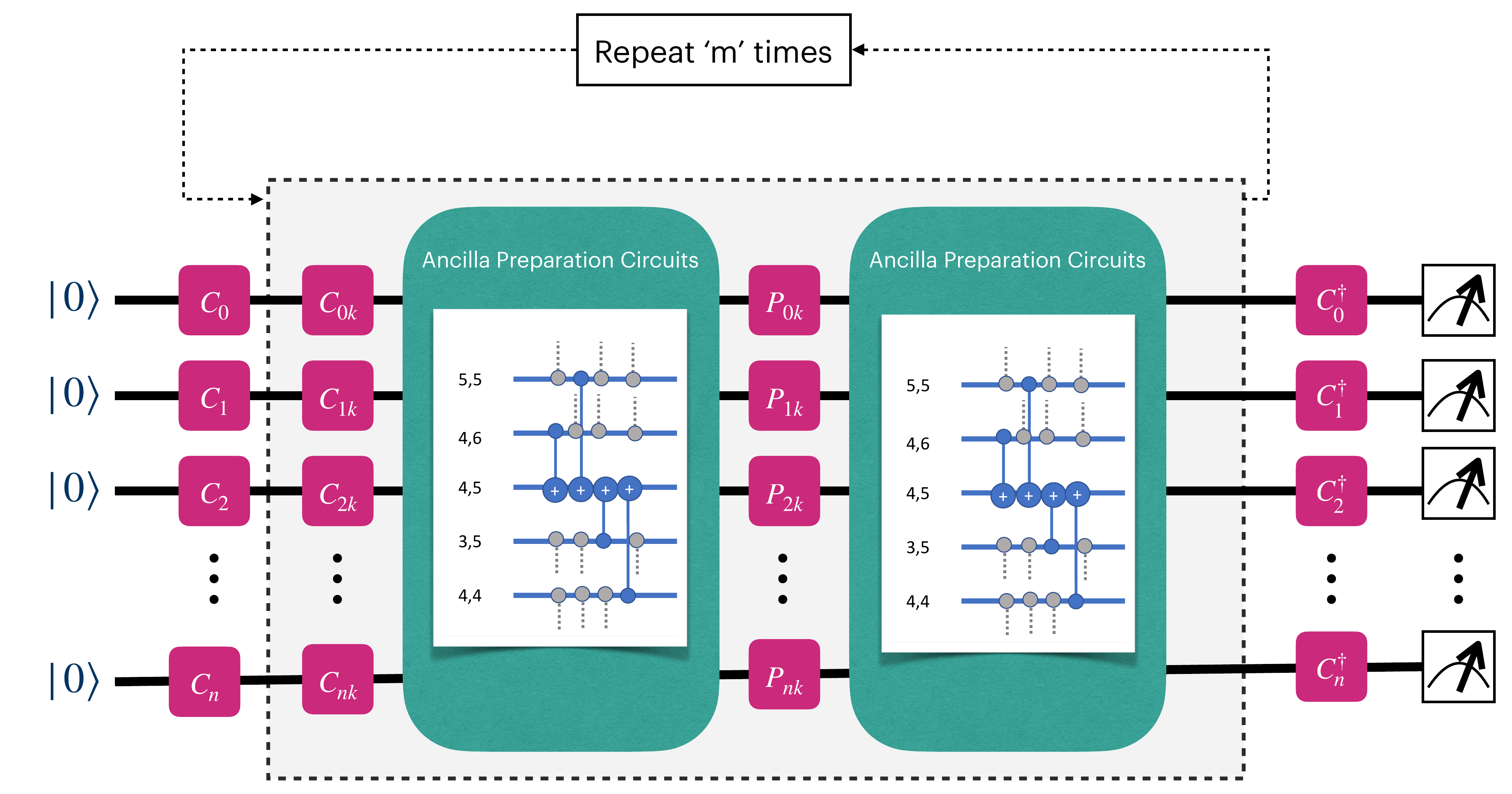}};

\coordinate[left=0.2cm of step5] (Empty2);
\coordinate[left=0.2cm of step1] (Empty1);
\coordinate[right=0.1cm of step4] (Empty3);
\coordinate[left=0.1cm of circ] (Empty4);
\coordinate[below=0.1cm of circ] (belowCirc);
\draw [->] (Empty3) -- +(1,0);
\draw [->] 
  (step5) -- (Empty2) -- (Empty1) |- (step1);

\node(rounds) at (-2.5,-5.5) {\includegraphics[width=0.95\linewidth]{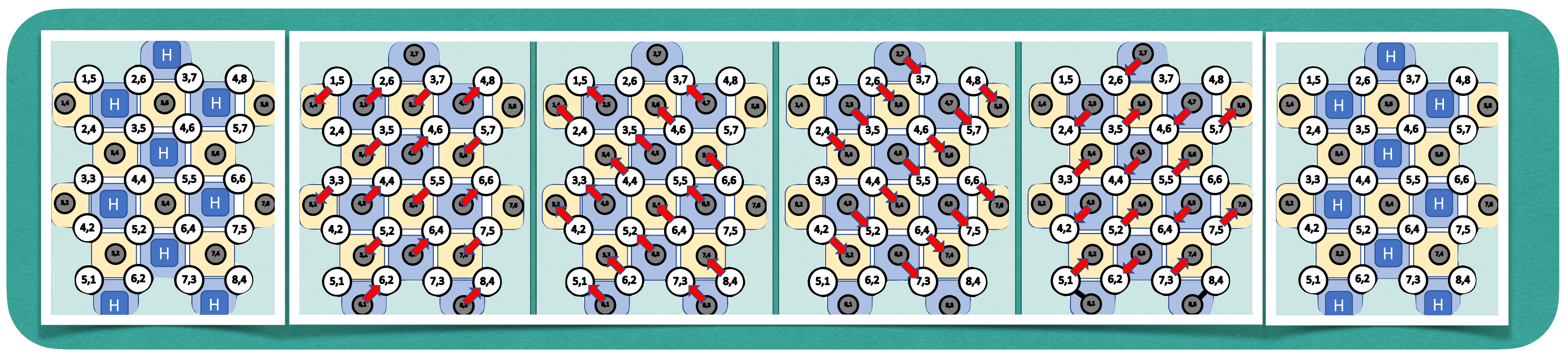}};
\draw[dashed,->] (-1.5,-2.3) -- (-8,-3.5);
\draw[dashed,->] (-1.5,-2.3) -- (3,-3.5);
\draw [->] (-7.5,-7.5) -- (2.5,-7.5);
\node at (-3,-7.7) {Time};
\node[circle,draw,fill=black!20] at (-7.5,-3.9) {1};
\node[circle,draw,fill=black!20] at (-4.9,-3.9) {2};
\node[circle,draw,fill=black!20] at (-2.2,-3.9) {3};
\node[circle,draw,fill=black!20] at (0.4,-3.9) {4};
\end{tikzpicture}

\caption{Schematic of the data acquisition step of the protocol. 
The boxes representing stabilizer preparation circuits are one complete round of stabilizer preparation (without measurement or reset of the ancillas); one such round is detailed in the bottom schematic. 
The numbers (top left of the diagrams in the bottom schematic) represent the order of two qubit gates as shown in \cref{fig:surfacecode}(c), the red arrows representing the two qubit CX gates. 
The large H boxes (left and right sub-plots of the bottom schematic) represent Hadamard gates on the covered qubits.
As with randomized benchmarking the circuits are repeated for multiple sequences over multiple different lengths of circuit. 
The details of the steps are as follows: (a) We randomly choose a circuit length $m$ from 0 (no stabilizer preparation circuits) up to the maximum number of times the block in grey is to be repeated. 
The maximum value for $m$ will depend on the errors in the device, and should heuristically be chosen to be around the inverse of the average error rate of one block. 
(b) For an $n$ qubit system, randomly choose $n$ starting Cliffords ($C_{0\dots n}$) and $n\times m$ random Cliffords and Paulis (respectively, ($C_{0\dots n,k}$), and ($P_{0\dots n,k}$) for k in $\{1\dots m\}$). 
These are used to locally average the noise. 
(c) The random single-qubit Cliffords and Paulis commute through the circuits in a way that is efficiently computable. 
Apply the inverse Clifford required to bring each qubit back into the computational basis. 
Note that we randomly determine whether a qubit is to be returned to the $|0\rangle$ or the $|1\rangle$ state and interpret the results so as to derive the chance of successfully returning to the desired state (i.e.\ record a 0 if the measurement accords with the desired state, or a 1 if it indicates an error). 
(d) Run the circuit for a number of shots (2,000 in our experiment) and record the resultant no-error/error bit-patterns. 
(e) The steps (a)-(d) are repeated multiple times so that the results can be averaged. 
(f) Once the data are gathered we can post-process the results to get the required estimates. 
We describe the data fitting process in mode detail in \cref{sec:protocol}, but in brief, the data can be fit to a model where we fit the data to a sequence of single exponential decays to estimate the relevant eigenvalues for the averaged circuit. 
The fitting to a decay curve makes the estimates independent of state preparation and measurement errors. 
}\label{fig:unitcell}
\end{figure*}

In an actual surface code circuit the ancilla qubits would be measured and reset after each stabilizer preparation round. 
However, where this is not yet possible or, indeed, where one wishes to examine the noise inherent in the circuits without introducing the additional noise that would be caused by measurement and reset, then the circuit extract shown in \cref{sec:protocol} is an example of the circuit required. 
One can also add measurements directly, although some care must be taken in this case (see \cref{sec:protocol} for more details).

To eliminate coherent noise in the circuits, we introduce random Pauli gates into the circuit. 
These gates serve to randomise the Pauli frame, which on average turns the noise into a Pauli channel~\cite{Kern2005,Knill2005,Wallman2016}. 
With Pauli frame randomisation the number of parameters describing the noise is, without further reductions, $4^{d}$ Pauli-channel eigenvalues, where $d$ is the number of data qubits. 

If instead of an initial round of Paulis, we start every two-round block with a round of single-qubit Cliffords, then the analysis in~\cite{Harper2017} applies and we \emph{locally average the noise} in the sense explored in~\cite{Harper2019} and summarised in \cref{sec:locallyaveraged}. 
This local averaging of the noise means we only need to extract $2^{d}$ Pauli-channel eigenvalues, which can be done in the course of this one experiment~\cite{Harper2019}. 
These $2^{d}$ parameters can be learned from extracting only the Pauli $Z$-type eigenvalues of the Pauli noise channel, and all of these eigenvalues can be determined simultaneously since they commute.

With these adjustments in mind, 
the process 
follows the successful core idea of randomized benchmarking~\cite{Emerson2005}: 
After initializing in a product state, we take each two-round block of stabilizer preparation, suitably averaged as above, and repeat it for $m$ steps for varying lengths $m$. 
After an inversion step at the end, we measure in the computational basis. 
The data obtained in this way allow us to separate the state-preparation and measurement errors of the initial round from the errors of the ``unit-cell'' of the two-round block of stabilizer extraction circuits.
An example of this circuit for a single plaquette of the surface code is shown in \cref{fig:unitcell}.
The full description of
all of the steps required
is set out 
in \cref{sec:protocol} and the analysis of the noise estimation is set out in \cref{sec:analysis}.

\section{Experimental implementation}

We tested 
these ideas 
on Google's Sycamore device, a 54-qubit superconducting device of the type described in~\cite{Arute2019}. 
In order to allow correct edge stabilizers, only 39 of the 54 qubits could be utilised, resulting in a $5\times 4$ surface code (i.e.\ 20 data qubits), set out as shown in \cref{fig:surfacecode}(c). 
For the Sycamore device we used sequence lengths $m$ (measured in double rounds of the surface code) of $m = [0, 1, 2, 3, 4, 6, 8]$. 
We ran $1,770$ different sequences, with $2,000$ shots per sequence. 
The total run time was around $8$ hours, in a dedicated $8$ hour time-slot on the device. 

Once we have the data there are a number of things we can immediately do in post-processing. 
The first is simply to marginalise the data to each qubit and fit each such marginalisation as per a standard one-qubit randomised benchmarking experiment. 
This gives us the single qubit (and two-qubit) average error rates to high precision, with a maximum \textit{relative} error of $\pm 0.1\%$. 
Secondly, if all one desired were the two-body correlations, then the data can be marginalised for each pair of qubits and the (normalized) correlations between two qubits ($X$ and $Y$, $\rho_{X,Y}$) calculated as:
\begin{equation}
    \rho_{X,Y}=\frac{E(XY)-E(X)E(Y)}{\sqrt{E(X^2)-E(X)^2}\sqrt{E(Y^2)-E(Y)^2}},\label{eq:moments}
\end{equation}
where $E$ is the expected value of the relevant qubit error rate. 
An example of these data directly extracted from the experiment is shown in~\cref{fig:simpledata}. 
As can be seen in~\cref{fig:simpledata}, there are correlated errors in the device which mostly cluster locally, although some significant longer-range correlations are present. 
While 
the experiment 
is only designed to identify and characterize errors rather than identify their cause, it is possible that the long range two-body correlations noted in the figure and the multi-qubit errors identified by 
 
the experiment
are symptoms of the leakage errors in such Sycamore devices, as more fully explained in~\cite{miao2022}.

What is the impact of nonzero correlation on quantum error correction? 
If we know a particular qubit has an error (say by a syndrome measurement), then the probability of the correlated qubit also having an error changes in a fashion which is computable, in expectation, by this correlation. 
A naive decoder assumes that the error rate on each qubit is independent of whether or not an error has occurred on a different qubit. 
Thus, these correlations can inform a decoder of the changed average error probabilities, improving average performance. 
Including such correlations is in fact necessary to obtain near-optimal performance~\cite{google2022, mcewen2021}. 

However, we can do much more. 
Where the number of qubits is relatively low (no more than, say, 30 data qubits) then it is also possible to extract the entire probability distribution of errors in the device, which includes all larger scale correlations. 
Here the limiting factor is that the probability distribution will have $2^n$ elements and rapidly will become too large to store. 
The details in \cref{sec:protocol} show how to do this.
Since we are interested in characterising even larger scale devices, we need to address this exponential growth in the full probability distribution. 
The experiment itself remains tractable, but if we want to extract the full probability distribution the post-processing scales as $O(2^n)$. 
In principle, knowledge of this full distribution is required for optimal decoding, but the prohibitive complexity makes obvious the need for approximations of this full distribution. 

The rest of the paper explores ways to obtain a useful estimate of the noise without reconstructing the full probability distribution. 
For instance, graphical models of the distribution that represent the noise in a more compact form might give a sufficiently accurate description of the noise for decoding or other purposes, without such severe limits on complexity. 
In the next section we construct such models from our experiments and in \cref{sec:lera} we answer the question of whether such models allow faithful predictions of device performance.

\begin{figure}[t]
\begin{tikzpicture}
\matrix (first) [table,text width=3em]
{
Qubit&Error\\
(1, 5)&0.080\\
(2, 6)&0.152\\
(3, 7)&0.132\\
(4, 8)&0.105\\
(2, 4)&0.087\\
(3, 5)&0.150\\
(4, 6)&0.114\\
(5, 7)&0.102\\
(3, 3)&0.075\\
(4, 4)&0.168\\
(5, 5)&0.182\\
(6, 6)&0.097\\
(4, 2)&0.197\\
(5, 3)&0.215\\
(6, 4)&0.152\\
(7, 5)&0.108\\
(5, 1)&0.230\\
(6, 2)&0.136\\
(7, 3)&0.140\\
(8, 4)&0.092\\
};

\node at (4.3,0) {\includegraphics[width=0.7\linewidth]{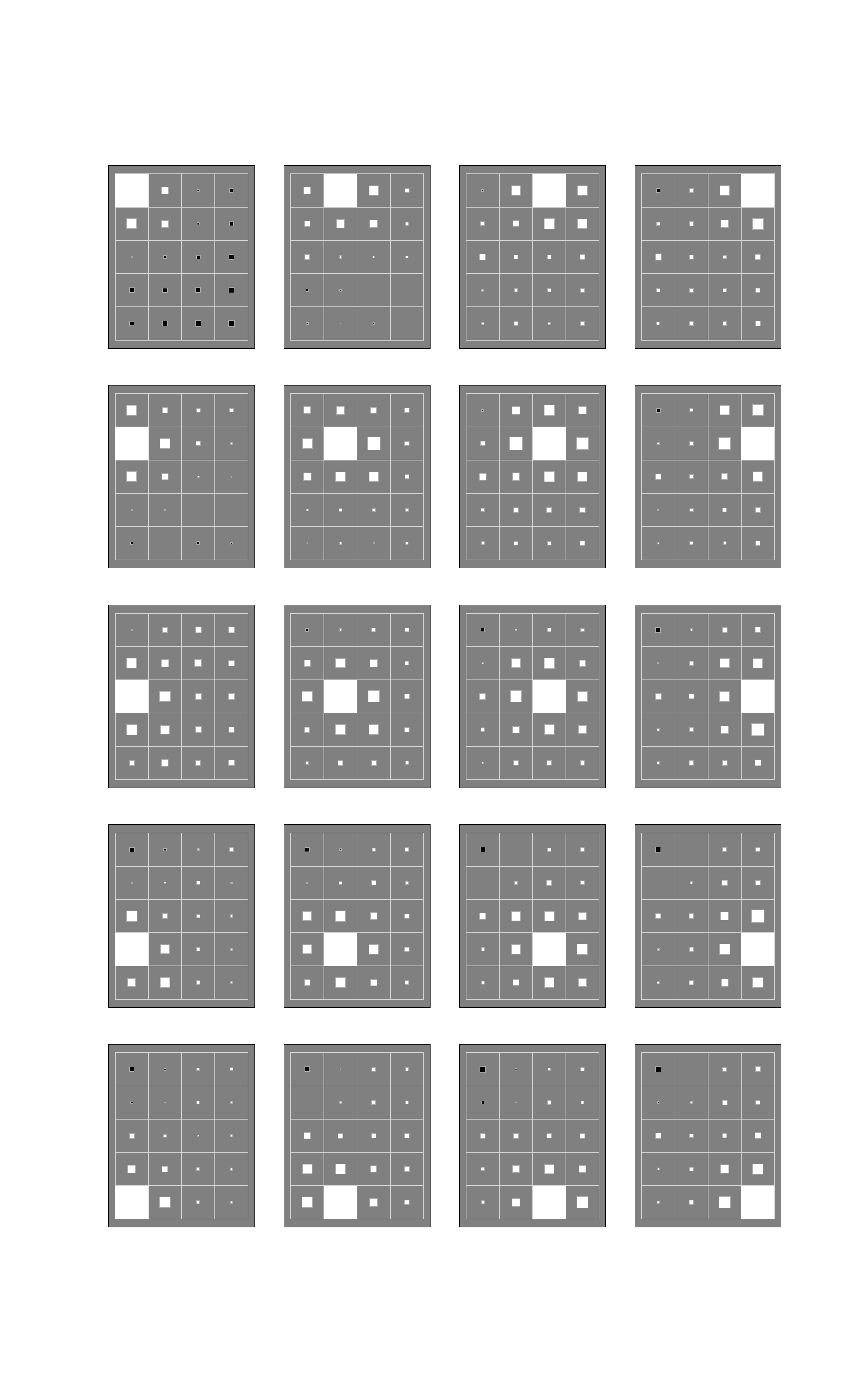}};
\end{tikzpicture}
\caption{Here we show examples of some of the summary data that can be captured in a scalable manner directly from these types of experiments. 
By marginalising the results we can build up a table of the individual error rates, from one round of the surface code (left hand side). 
These error rates represent the probability of any Pauli error occurring on the relevant qubit during one round of the stabilizer preparation circuit.
The right hand side of the figure shows the two-body correlations between each data qubit (solid white in the appropriate sub-graph) and each other data-qubit in the code.
The correlations between the errors on two qubits $X$ and $Y$ ($\rho_{X,Y}$) can be calculated in terms of moments as \cref{eq:moments}. 
This Hinton diagram has a white square (black for negative correlations), where the area of the square is in proportion to the size of the correlation. 
Bootstrapping from the original measured sequences was used to plot two-$\sigma$ error bars, plotted as the width of the line bordering the square (they are barely visible in most cases). 
Of note is that the correlations between data qubits appear (mainly) to be stronger with local data qubits (i.e.\ those close in Manhattan distance according to the device connections).
}\label{fig:simpledata}
\end{figure}

\section{Building and testing noise models}

To go beyond brute force descriptions of global probability distributions, we briefly review the notion of a graphical model, specialized for locally averaged Pauli noise. 
A locally averaged noise model on $n$ qubits can be thought of as a probability distribution on $n$-bit strings, where the presence or absence of an error (either of $X$, $Y$, or $Z$) on the $i$th bit is denoted by a 0 or 1 respectively. 
Under the mild assumption that every error event has nonzero probability, this probability distribution can be described by a Hamiltonian $H(\boldsymbol{x})$ on bitstrings $\boldsymbol{x}$, where the Hamiltonian can be chosen to be $H(\boldsymbol{x}) = \log p(\boldsymbol{x})$, the log of the probability of $\boldsymbol{x}$. 

In this picture, an ``energy shift'' by a constant factor corresponds to changing the normalization of the probability distribution, which can often be neglected for sampling or modeling purposes. 
There is no need to invoke the physical concept of temperature since this is a formal mapping. 
We are interested in modeling a probability distribution as a graphical model whose Hamiltonian has only a bounded number of interactions per qubit, each among a bounded number of other qubits, in order to keep learning tractable. 

The simplest example of this mapping is where a single nonidentity Pauli error occurs with probability $p$ and otherwise no error occurs. 
Then $H(0) = \log(1-p)$ and $H(1) = \log(p)$. 
This Hamiltonian can be written as $H(x) = f + h x$, where $h = \log(p/(1-p))$ and where $f = \log(1-p)$ controls the (often optional) normalization. 
For $n$ qubits and general independent noise, one would have the Hamiltonian $H(\boldsymbol{x}) = \sum_k h_k x_k$ (here and henceforth we drop normalizations). 
The special case of identically distributed noise occurs when $h_k = h$ for all $k$.
To generate interesting correlations, we need to add \textit{coupling terms} to the Hamiltonian, and one is naturally lead to models with Hamiltonians of the form
\begin{equation}
    H(\boldsymbol{x}) = \sum_k h_k x_k + \sum_{j,k} J_{j,k} x_j x_k + \ldots\,.
\end{equation}
The simplest nontrivial case might contain only nearest-neighbor two-body correlations, for example. 

For these models, learning the entire probability distribution is equivalent to learning the couplings of the associated Hamiltonian. 
Examples of such models are given in~\cref{fig:markovBlanket}.
In \cref{fig:markovBlanket}(a) we assume independent and identically distributed (IID) noise only (so that $H(\boldsymbol{x}) = h \sum_k x_k$). 
In \cref{fig:markovBlanket}(b) we keep independence, but relax to non-identically distributed (IND) noise, corresponding to $H(\boldsymbol{x}) = \sum_k h_k x_k$. 
\Cref{fig:markovBlanket}(c) shows the simplest model of nearest-neighbor correlation, an Ising-style model with $H(\boldsymbol{x}) = \sum_k h_k x_k + \sum_{\langle j,k\rangle} J_{j,k} x_j x_k$, where the second sum is over nearest neighbors. 
Finally, we consider a model that coarse-grains the device into a 1D array. 
This allows arbitrary correlations along one row of the device, but limits vertical correlations to those between qubits in adjacent rows; this is shown in~\cref{fig:markovBlanket}(d). 
The associated coarse-grained 1D (CG1D) model Hamiltonian has up to 8-body coupling terms.

The choice of which which graphical models to examine is well motivated. 
The `Ising style' model is relevant as the long-range spread of errors from one data qubit to another should be limited in the device, the gates coupling the data qubits to the ancillas acting as a type of `one-way' gate for $Z$ and $X$ errors~\cite{Versluis2017}. 
Indeed, in \cref{sec:ce} we show how the data confirm that (for this device) the Ising model is a good choice compared with other potential decompositions of equivalent expressive power. 
Similarly motivated, the CG1D model is also able to capture longer correlations, albeit at the cost of exponential scaling in the width of one dimension. 
While problematic for larger systems, it might still be a valid methodology in systems with highly biased noise, where the proposed grid would be rectangular rather than square~\cite{chamberland2022,napp2013}.

While these models attempt to match the correlation structure in the device, some model error is inevitable. 
For instance, the Ising model assumes local interactions (which we imagine arising from gate errors), but there are many other noise mechanisms that can cause errors to spread from one qubit to another. 
Examples include energy state leakage~\cite{miao2022}, leakage of control signals and qubit frequency crowding. 
Model error in this case will indicate such processes at work.

One method of measuring model error is to use the relative entropy, or more flexibly, the smoothed symmetrized version of it, the Jensen-Shannon divergence (JSD)~\cite{Lin1991}. 
One can measure the JSD as between the global distribution and the distribution encapsulated in the model to quantify the model error. 
Here we have all the data required to make this calculation. 
However, to do so in general requires the full distribution, and therefore this is not a scalable solution. 
Another potential method might be to form the covariance matrices embodied in each of the distributions --- this can be done in a scalable manner. 
(Obviously models (a) and (b) in~\cref{fig:markovBlanket} have no inter-qubit correlations.) 
These covariance matrices only capture the two-body correlations in the distributions and technically only form a lower bound on the total variational distance (TVD) of the distributions (\cref{sec:covbound}), but are usable as a guide in most non-pathological cases. 
We set out these calculations in \cref{fig:metrics}.
Unsurprisingly, as can be seen from \cref{fig:metrics}, the richer the model the more complexity we can capture in the correlations between the qubits.

The question then arises not as to whether these more sophisticated models can capture characteristics of the noise that cannot be captured by the simpler models, but whether those additional characteristics are important in determining and/or predicting error correction properties in the device. 
We now address this question, focusing on the logical error rate of the device.

\begin{table}
\begin{tabular}{p{1.2cm}p{0.3cm}p{1.5cm}p{0.3cm}p{1.5cm}}
\toprule
Model                         & \multicolumn{2}{c}{JSD($D\lVert M$)} & \multicolumn{2}{c}{$\|\Sigma_D-\Sigma_M\|$} \\ \midrule
\multicolumn{1}{l|}{IID}      && 0.229                  && {0.124}                      \\
\multicolumn{1}{l|}{IND}    && 0.192                  && {0.090}                    \\
\multicolumn{1}{l|}{Ising} && 0.167                  && {0.056}                    \\
\multicolumn{1}{l|}{CG1D}    && 0.148                  && {0.019}                  \\ \bottomrule
\end{tabular}
\caption{Metrics for the observed error distribution ($D$) compared with the distributions embodied in the various models ($M$). 
The two error metrics are the norm between the respective correlation matrices and the Jensen-Shannon distance (JSD) (see main text for definitions). 
The models are IID (identical, independently distributed errors) \cref{fig:markovBlanket}(a), IND (independent non-identically distributed errors) \cref{fig:markovBlanket}(b), Ising, with nearest-neighbor two-qubit correlations only \cref{fig:markovBlanket}(c), and the coarse-grained 1D (CG1D) distribution shown in \cref{fig:markovBlanket}(d).}\label{fig:metrics}
\end{table}

\begin{figure}[t!]
    \includegraphics[width=\columnwidth]{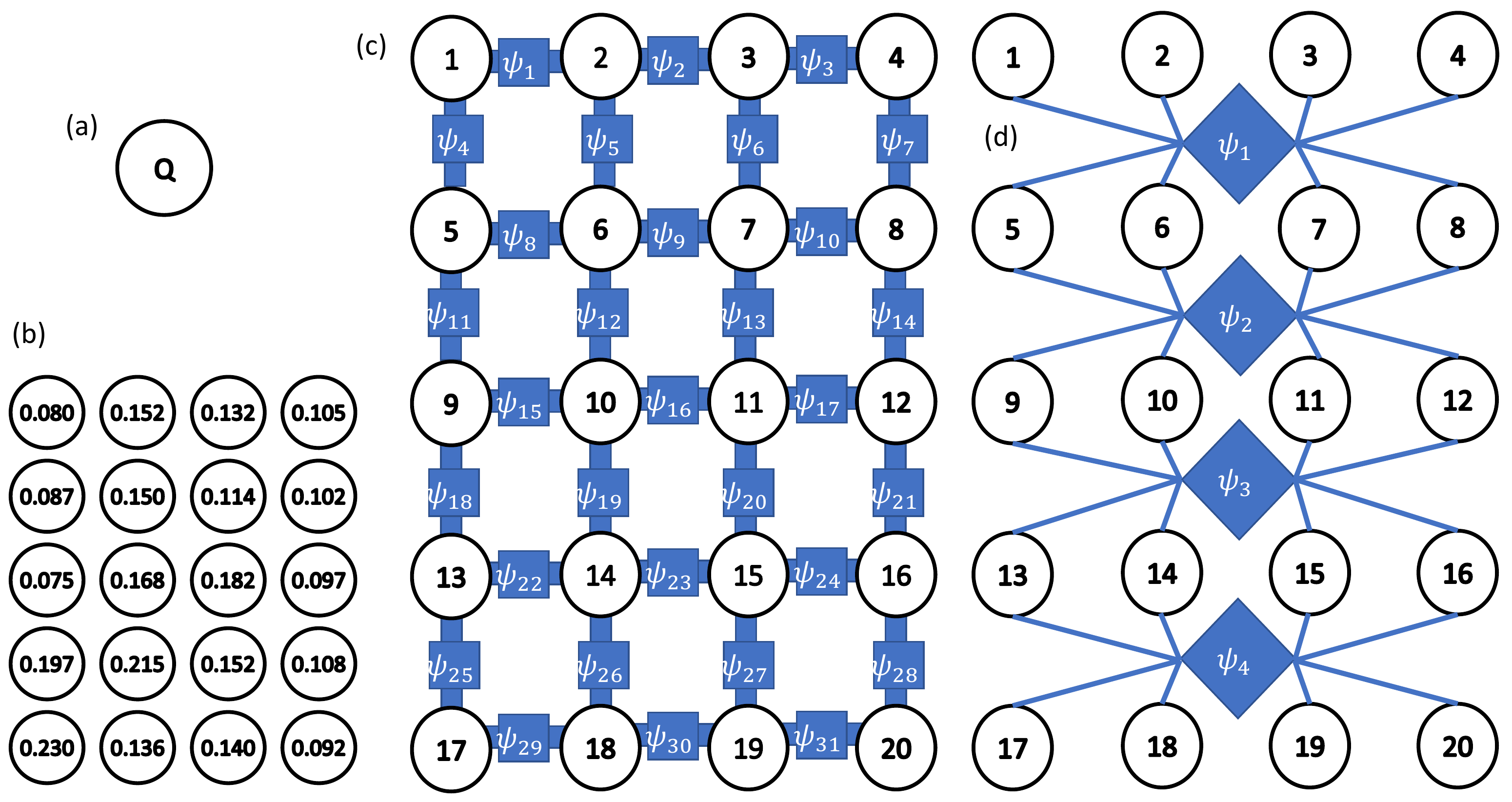}
    \caption{Various graphical models used for modeling the observed error distribution. 
    Sites connected by ``factors'' $\psi_i$ can be correlated by arbitrary Hamiltonian couplings supported adjacent to the factors. 
    \textbf{(a)} A typical model used for decoder testing. 
    All the qubits are assumed to have identical independent depolarizing noise, so there is only one node. 
    \textbf{(b)} Here the qubits are still independent but each has their own error rate (here the error rate shown in~\cref{fig:simpledata}). 
    This model requires $n$ parameters, i.e.\ one for each data qubit --- in this case 20. 
    \textbf{(c)} Here we model the noise with an Ising model where the only factors $\psi_i$ are between neighboring qubits. 
    \textbf{(d)} This model is referred to in the text as the Coarse-Grained 1D (CG1D) model, where we have reduced the surface code to a one-dimensional graph. 
    }
    \label{fig:markovBlanket}
\end{figure}

\section{Logical Error Rate Analysis}\label{sec:lera}

\begin{figure*}[t!]
    \includegraphics[width=\textwidth]{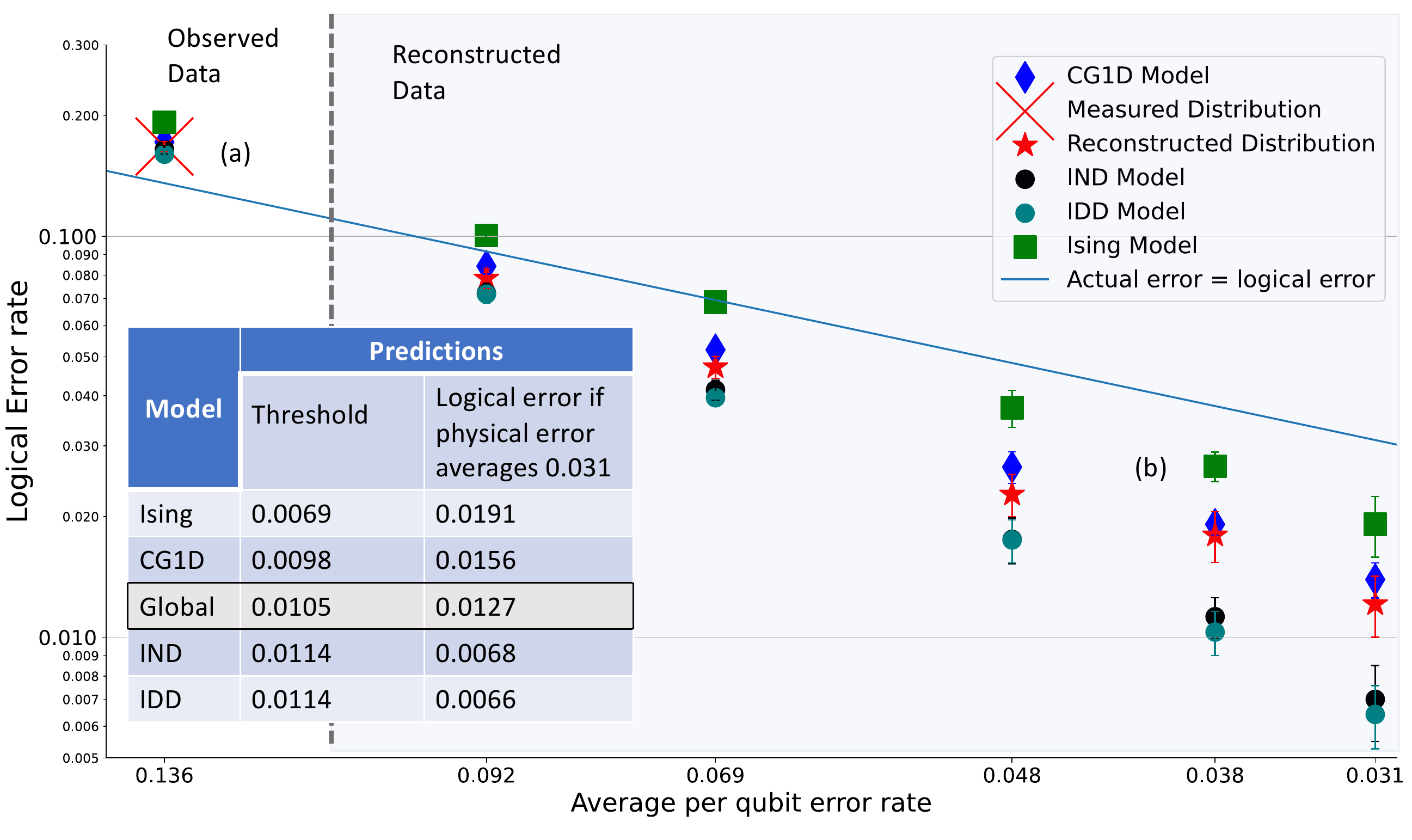}
    \caption{
    Here we plot the logical error rate of various physical error distributions, where we have used a generic  QECSIM~\cite{qecsim} decoder (a tensor network decoder with bond dimension $\chi=8$) to decode the error syndromes. The process involved: 1) Measuring the empirical global distribution using \acr, as discussed in this manuscript. 2) Constructing counterfactual global error distributions with smaller error rates (\emph{reconstructed distributions}). The methodology for this step is discussed in the text and \cref{sec:theoretical}. 3) For each of these distributions (observed and reconstructed), each model of interest (see~\cref{fig:markovBlanket}) is constructed. 4) The relevant distributions and each of the models are then sampled to provide a sample of the errors given by each of them. For each sample we use the QECSIM decoder to determine whether a successful decoding of the error syndrome would be made. (To enable the sampled locally averaged errors to simulate a full Pauli distribution we assume an X,Y or Z Pauli error with  equal probability for every affected site.) 5)~Sufficient samples are taken of each distribution and each model of that distribution to allow the estimation of a logical error rate for that distribution/model of the distribution. 6) Steps 4 and 5 are repeated 10 times to generate reported bootstrap error bars.}
    \label{fig:errorPlot}
\end{figure*}
In order to generate counterfactual error distributions that are less noisy but still contain similar correlation structures, we consider the following one-parameter family of error channels associated to a probability distribution $p$ on bit strings. 
The error probabilities $p$ can be related to the eigenvalues $\lambda$ of the superoperator representation of the channel using $W$, the Walsh-Hadamard transformation~\cite{Flammia2019,Harper2019,berg2021}.
In fact, we have $\lambda = W p$, and this map is invertible to obtain $p = W^{-1}\lambda$. 
If this superoperator with eigenvalues $\lambda$ were generated by a continuous time process, then we can generate the family of channels with eigenvalues 
\begin{equation}
    \lambda(t) = \exp(t \log \lambda(1))\,,\quad \lambda(1) = W p\,.\label{eq:lt}
\end{equation}
When $t=0$, this is the identity channel, and $t=1$ gives back the original noise channel eigenvalues. 
We can study the error rates of this distribution by considering 
\begin{equation}
    p(t) = W^{-1} \lambda(t)\label{eq:pt}
\end{equation}
for various values of $t\ge 0$, with smaller values corresponding to less noise, but with a qualitatively similar correlation structure as the true channel $p(1)$. 
Note that larger integer values $t=2, 3, \ldots$ correspond to multiple applications of noise. 

There is an important subtlety: not every input $p$ is ``divisible'' into a continuous time process, even in the simple case of Pauli noise channels. 
We therefore add an extra step of projecting $p(t)$ defined by \cref{eq:pt} to the nearest point on the probability simplex. 
See \cref{sec:theoretical} for more discussion and analysis of the probability distributions obtained by this method.

To summarize: our experiment gives us the observed distribution of the actual noise in the device, and our graphical models have given us estimates of this noise in a convenient and scalable format. 
By making the additional assumption that the noise is generated by a continuous time process, we use \cref{eq:lt} and \cref{eq:pt} to generate counterfactual theoretical noise distributions which are \emph{estimates} of the noise that would exist in the device if the noise channel were `less' noisy, but retained the correlation features of the observed noise channel. 
We can also construct models of this counterfactual noise.

The following calculations of the logical error rate are based on two additional important assumptions. 
First, as mentioned previously, this experiment omits the measurement and reset of ancillas required for true quantum error correction. 
These will, undoubtedly, introduce complexity, noise, measurement errors and timing issues in any final circuit. 
The numbers obtained and discussed below must be regarded with this caveat in mind. 
Second, by using a generic decoder, we are not attempting to utilise our knowledge of the noise to \emph{improve} the decoding process. 
The decoder utilised is a generic decoder provided by QECSIM~\cite{qecsim}. 
While there is much work yet to be done to work out if actual knowledge of the noise can improve decoding success rates \cite{tiurev2022}, this is not what we do here --- rather the analysis below might be seen as setting out the base success rate, which such decoders might seek to improve on. 
Writing such decoders is the subject of on-going work. 

With this in mind, we can now calculate the logical error rate of both the observed and constructed noise channels, and we can determine if the more sophisticated models of that noise are better able to predict the likely logical error rate than their primitive counterparts. 
If so, then it is likely they are capturing essential elements of the noise that the simpler models cannot. 

\Cref{fig:errorPlot} shows the results of these logical error rate estimates for the different noise models. 
In each case 10,000 error samples were taken to estimate the logical error rate, repeated 10 times to provide the error bars. 
Of interest in the plot is the behaviour when the average per-qubit physical error rate is approximately 0.136 (point (a) in the plot). 
This is the measured noise in the device.
At that point the logical error rate is larger than this (0.176), but notably there is very little difference in the logical error rate between the various models (from the most primitive IID model to the CG1D distribution) and the full observed distribution. 
At this point in the spectrum of noise channels the simple IID noise model is as good as any (point (a) in the figure). 

However, as we use higher fidelity reconstructed distributions a clear difference emerges. 
By the time we hit point (b) in the figure, the simpler models result in an estimated logical error rate that is significantly lower than the logical error rate predicted by the full reconstructed distribution. 
The correlated errors have an impact in this regime, curtailing the generic decoder's ability to correct for errors. 
(Again we note that a decoder tailored for these noise characteristics might perform much better.) 
This accords with beliefs that correlated noise in a machine will have an impact on logical error rates. 
Interestingly even in this regime the CG1D model is within error bars of predicting the same logical error rate as the full reconstructed distribution. 

The Ising model is pessimistic in its predictions (predicting a higher logical error rate). 
Our belief is that it is due to the Ising model not being able to correctly capture longer range correlations that happen to exist in this device.
The parameterization of the Ising model therefore incorporates these longer range correlations into the short range correlations it can model, resulting in a model that has stronger short range correlations than the full distribution.

In this small (low-distance) surface code implementation, it is these stronger short range correlations that have the largest impact on the logical error rate.
Despite this limitation as we show in \cref{sec:ce}, if we constrain the model to use only a limited number of two-qubit factors, the Ising model is still the optimal model of such distributions, minimising model error (and possibly avoiding overfitting). 
On these data one could be confident that if an Ising model of the distribution, populated from data from the device, was below threshold, the actual device is also likely to be below threshold.
\bigskip

\section {Conclusions}
We have presented and implemented a method of characterising an important part of the surface code, namely the stabilizer preparation portion of an error correction circuit. 
We have shown how one can use random benchmarking-style experiments to measure the locally averaged noise and the Pauli noise on the data qubits used in the code. 
We have shown how to create graphical models of the noise which continue to be tractable as surface code sizes increase.
These models allow us to explore important questions related to the ability to run error correction on the device. 
Finally we presented empirical evidence that shows how these more sophisticated models appear to be increasingly necessary if one wishes to accurately predict the errors in the device as error rates get lower and correlations shorter range. 
The importance of taking such errors into account is highlighted by the data we extrapolate from the device, providing support to the belief that decoders that take into account the actual noise of the device could potentially lead to higher threshold implementations of error correcting codes.

\section{Code and data availability}
All code and data is available upon reasonable request.

\section{Acknowledgments}

The experiment was performed in collaboration with the Google Quantum AI hardware team under the direction of Y.\ Chen, J.\ Kelly, and A.\ Megrant. 
We acknowledge the work of the team in fabricating and packaging the processor; building and outfitting the cryogenic and control systems; executing baseline calibrations; optimizing processor performance; and providing the tools to execute the experiment. 
We thank D.\ Debroy, B.\ Foxen, M.\ Harrigan, and M.\ Newman for thoroughly reading the manuscript and providing helpful feedback. 
RH would like to thank Robin Blume-Kohout for several insightful conversations relating to the paper.
RH is funded by the Sydney Quantum Academy and this work was supported by ARO grant W911NF2110001.
SF's contributions to this project were completed while he was affiliated with the University of Sydney. 
\bibliography{refs}

\newpage

\appendix

\section{Noise Characterization}\label{sec:background}

Many characterization techniques have been developed to understand noise in quantum devices. 
Traditionally full-characterization techniques such as process tomography have been used when devices are in their infancy. 
For small devices the qubits can be fully characterised with a view to gaining insight in to the underlying causes of noise. 
Process tomography \cite{Chuang1997} and its variants \cite{Merkel2012,Blume-Kohout2016,evan2022} remain an important tool in the characterization arsenal. 
However, such methods are limited in that they can only be applied to a small number of qubits. 
Where we have more than a few qubits, scalablity issues render the techniques impractical.

Strong noise-averaging or partial characterization is a way to overcome the scalability issues of full characterization. 
A popular partial characterization technique is randomized benchmarking and its variants~\cite{Emerson2007, Knill2008, Magesan2011, Carignan-Dugas2015, Hashagen2018, Erhard2019, Helsen2019, Proctor2018, proctor21b, hines2022}.
These techniques define a natural measure of a `block' of noise over which the noise in the device is averaged.
By defining a block of noise that is well behaved through multiple applications, simple decay curves can used to estimate parameters in a way that eliminates state preparation and measurement (SPAM) errors and provides small error bars~\cite{Harper2018}. 
Importantly, these protocols average the noise in the device, and this multiple-run average turns the noise into the equivalent of a Pauli channel removing coherent effects of the noise. 
Depending on the protocol, the Pauli noise itself may be additionally averaged, often reducing the noise to a single value, the fidelity of the overall channel. 
The conversion to a Pauli channel is justified as the noise in the device can itself be converted into a Pauli channel during the operation of a device by using techniques such as Pauli frame randomization~\cite{Kern2005,Knill2005,Wallman2016}. 
The subsequent strong-averaging is a characterization convenience.

Rather than seeking to measure the average noise of the entire Pauli channel, further advances have shown how to harness the power of simultaneous single qubit measurements to extract much more information about the noise \cite{Flammia2019, Harper2019, huang2020, cotler20, Harper2021, flammia2021, flammia2021b}.
Recently it has been shown one can estimate Pauli noise channels from error correction syndrome measurements~\cite{wagner2021}, although in the case of surface-code style error correction this is limited to a maximum of two-body correlations, irrespective of the size of the code. 
Here we will use techniques that allow the extraction of the global error distribution \cite{Flammia2019,Harper2019} on devices running error correcting circuits and show how to tame this exponentially large amount of data by constructing appropriate models of the noise.

\section{Device Characteristics}\label{sec:devicedetails}

\begin{figure*}[th!]
    \centering
    \begin{tikzpicture}
	\node[inner sep=0pt] at (-0,6.5)         {\includegraphics[width=1\textwidth]{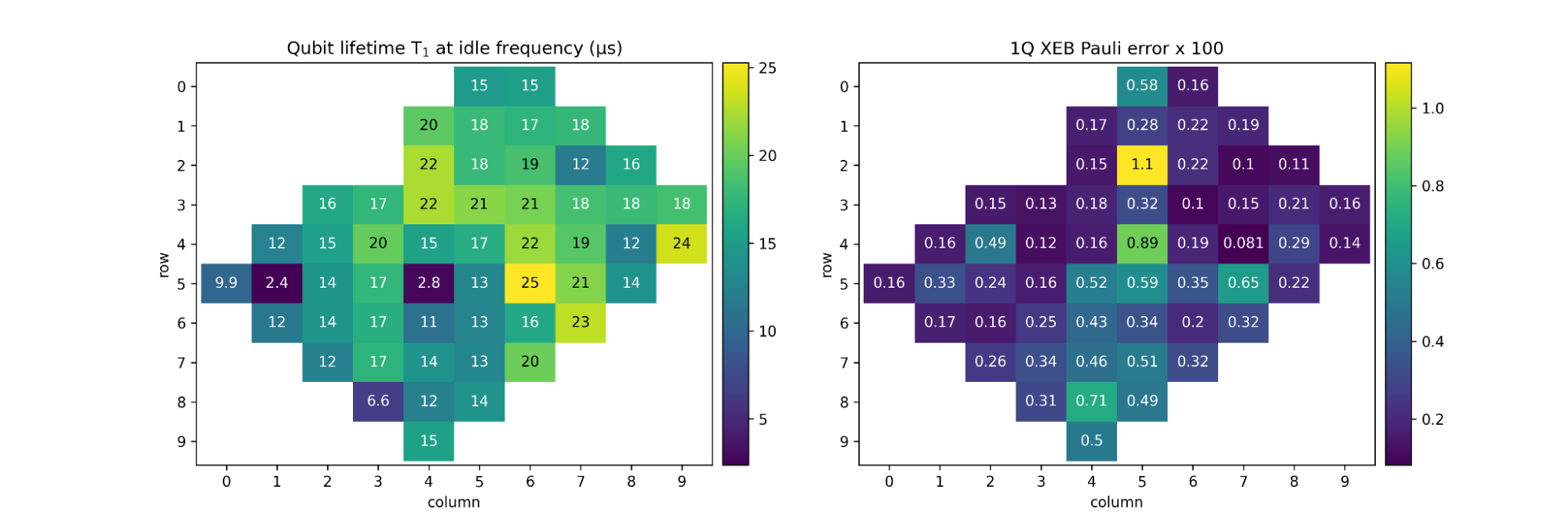}};
	\node at (-7,9.5) {\textbf{(a)}};
	
	\node[inner sep=0pt] at (0,0)         {\includegraphics[width=1\textwidth]{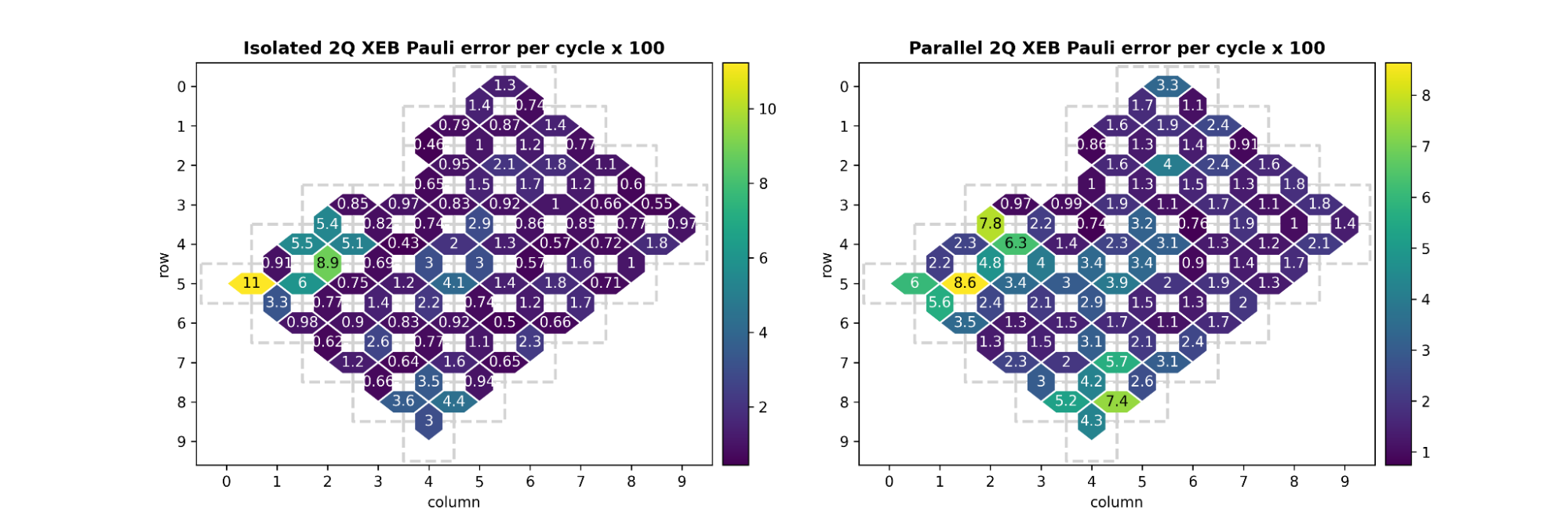}};
	\node at (-7,2.8) {\textbf{(b)}};
	
	\node[inner sep=0pt] at (0,-6.5)         {\includegraphics[width=1\textwidth]{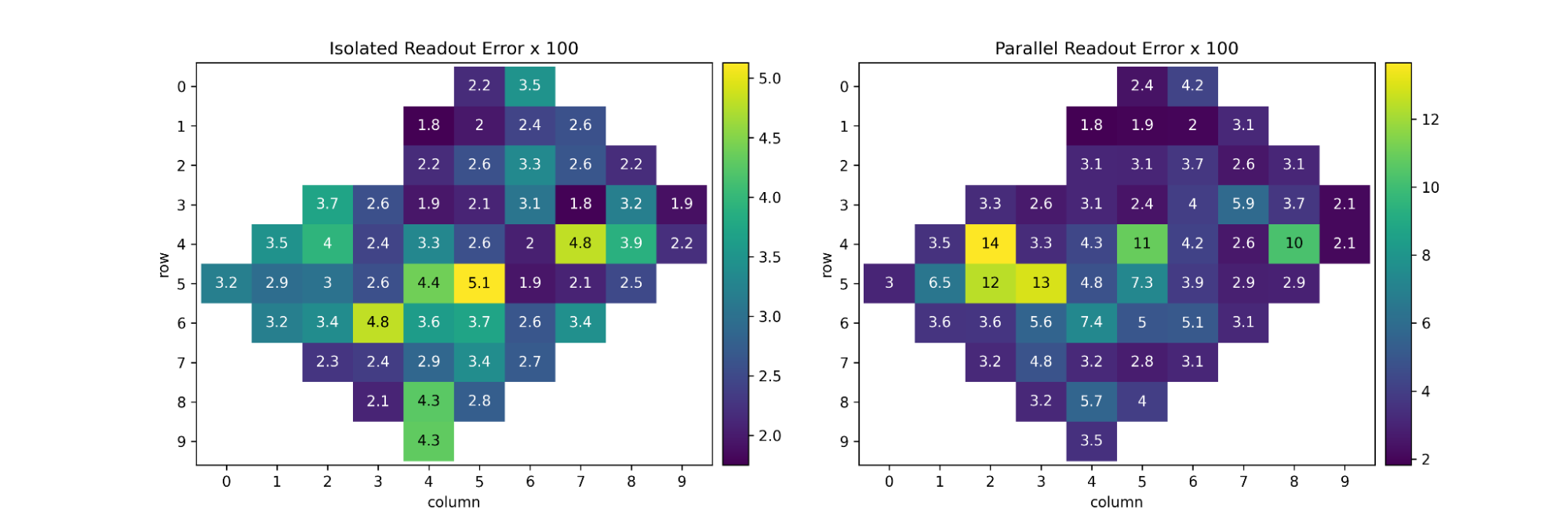}};
	\node at (-7,-3.6) {\textbf{(c)}};

	\end{tikzpicture}	
	
    \caption{\textbf{(a)} typical qubit T1 lifetimes and their frequency and single qubit Pauli errors. \textbf{(b)} isolated and parallel two qubit XEB Pauli errors. \textbf{(c)} isolated and simultaneous measurements errors. In all cases see \cref{sec:devicedetails} for further details.}
    \label{fig:google}
\end{figure*}

The device used in these experiments is the exactly the same 54 qubit sycamore processor as in Ref.~\cite{Arute2019}. 
We operate 39 qubits selected from a 53 qubit grid with idle and interaction frequencies chosen to be optimized for on-resonance sqrt(iSWAP) and Sycamore gates; while running circuits on these 39 qubits, the unused qubits in the 53 qubit grid idle at their normal operating frequencies and are not biased to low frequencies. 
Automated calibrations were performed and control parameters updated just three times a week with minimal manual intervention. 
Instabilities in coherence times and electronics drift were not compensated for in the intervening time. 
Most data presented in this paper were acquired approximately 60 hours after the previous calibration and control parameter update.

Immediately following each automated calibration cycle, gate and readout fidelities were characterized to provide representation device performance characteristics as summarized in \cref{fig:google}. 
\Cref{fig:google}(a) shows typical qubit T1 lifetimes at their idle frequencies, and \cref{fig:google}(b) plots single qubit Pauli errors for $\pi/2$ gates characterized with isolated cross entropy benchmarking (XEB), where isolated operation is defined as applying gates to a single qubit with the remaining 52 idling at their idle frequency. 
On this processor we have observed only a modest increase in single-qubit gate errors during simultaneous operation (see fig.~2 of Ref.~\cite{Arute2019}) but significant differences are present for isolated and simultaneous readout and two-qubit gate operations. 
Since the precise mechanisms of crosstalk are not well understood, we assume that performance is dependent on exactly which set of qubits are operated simultaneously. 
To provide a representative estimate of these crosstalk effects, in \cref{fig:google}(b) and \cref{fig:google}(c) we provide isolated and simultaneous measurements of Sycamore gate fidelity and readout error. 
Note: Sycamore gate fidelity is characterized in four layers to characterize gates using each of the (up to) four couplers connected to each qubit. 
Finally we note again that \acr{} is robust to SPAM errors, although high SPAM errors will reduce the signal and therefore increase the number of measurements needed to achieve the same accuracy.

While \cref{sec:protocol} describes the circuits run in terms of Clifford gates, natively the Sycamore device runs Sycamore gates for its two-qubit gates. 
The CX gates are translated into two rounds of Sycamore gates
\begin{align*}
    \text{SYC}=\left[\begin{array}{cccc}1&0&0&0\\0&0&-i&0\\0&-i&0&0\\0&0&0&e^{-i\pi/6}\end{array}\right]
\end{align*}
with appropriate single qubit rotations. The single qubit rotations (being the random single qubit Cliffords, the Sycamore gate corrections, the random Paulis and the required Hadamard gates can also be combined into single Phased XZ gates, defined as
\begin{align*}
    &\text{pXZ}(x,z,a)=\\&\left[\begin{array}{cc} \mathrm{e}^{i\pi x/2}\cos(\pi x/2) & -i\mathrm{e}^{i\pi(x/2-a)}\sin(\pi x/2)\\-i\mathrm{e}^{i\pi(x/2+z+a)}\sin(\pi x/2) & \mathrm{e}^{i\pi(x/2+z)}\cos(\pi x/2)\end{array}\right]
\end{align*}
This is done automatically by using the appropriate Cirq functions to optimise the circuits for the Sycamore device.
 
 An  extract of the circuit for a particular selection of random Cliffords and Paulis is shown as \cref{fig:circuitextract}(b)

\section{Description of the experiments}\label{sec:protocol}

As with most SPAM-robust protocols that measure incoherent noise rates, the experiment design utilizes many of the features of randomized benchmarking. 
We use single-qubit Clifford gates to locally average the noise, with additional Pauli frame randomization where we can't easily use Clifford gates to remove coherent noise between iterations of the surface code circuits.

\begin{figure*}[ht!]
    \centering
    \begin{tikzpicture}
	\node[inner sep=0pt] at (-0,6.5)         {\includegraphics[width=\textwidth]{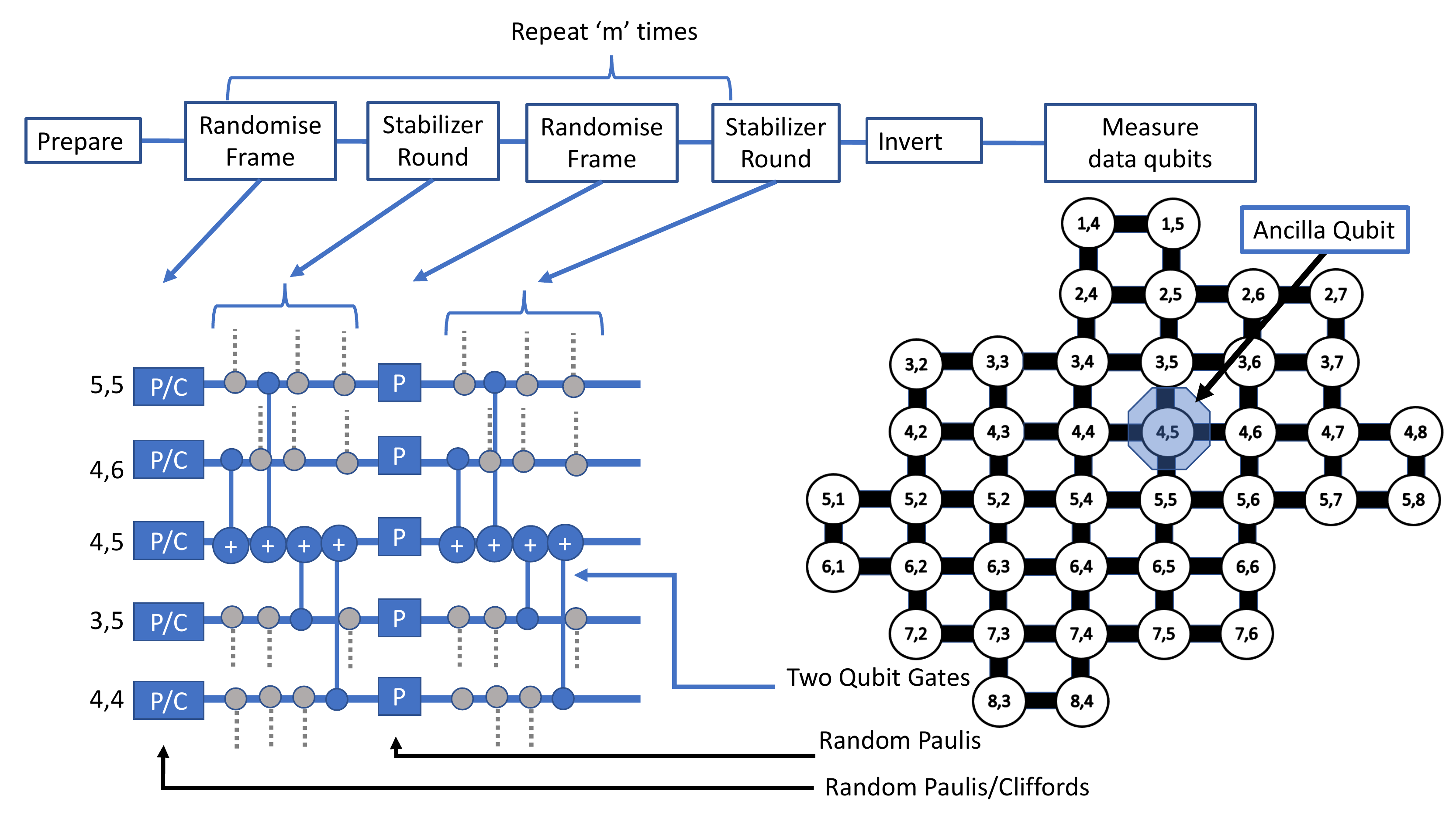}};
	\node at (-8.5,11) {\textbf{(a)}};	
	\node[inner sep=0pt] at (0,-1)         {\includegraphics[width=\textwidth]{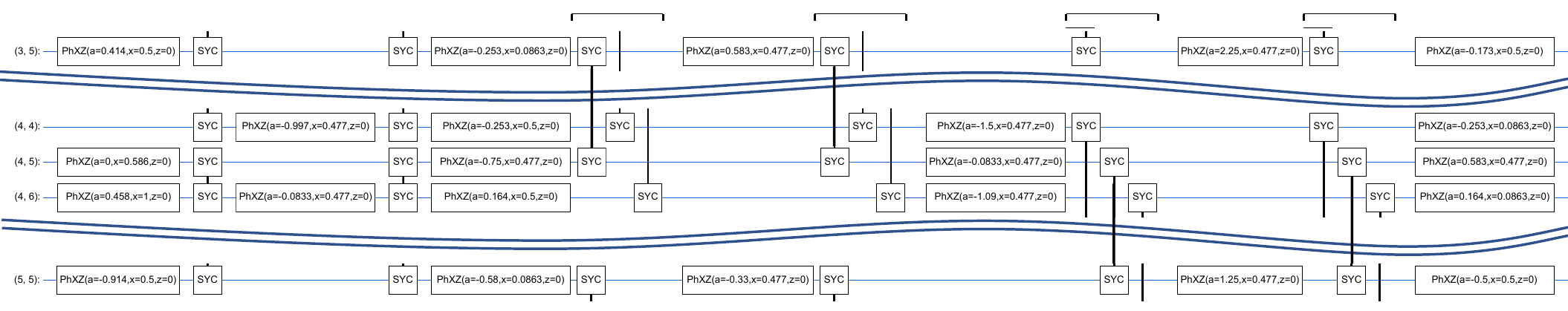}};
	\node at (-8.5,1) {\textbf{(b)}};
    \end{tikzpicture}	
    
    \caption{(a) Here we show an example of the circuits we measure. 
    These are the same circuits used to prepare the ancilla qubits, so as to allow syndrome extraction.
    We show only the gates for one ancilla (qubit 4,5). 
    (We have included `shadows' of other gates just to remind the reader that many other two qubit gates are being executed simultaneously with the gates shown; these `shadow' gates are in gray). 
    The inclusion of random Pauli gates ensures Pauli frame randomization, which when averaged over many runs means that the statistics gathered are as if the noise channel we are measuring were a Pauli noise channel. 
    If we replace the first round of Pauli gates with random single qubit Clifford gates, then we can locally average the noise in a sense made more precise in the text. 
    The circuit extract shown here is a round of gates.
    Each round returns the qubits to the computational basis, subject to a random single qubit Pauli (or Clifford) on the qubits. 
    This is easily tracked and inverted just prior to readout. 
    By preparing and measuring sequences for circuits with varying lengths (i.e.\ vary $m$ in the graphic above), we are able to transform observed probabilities into eigenvalues and fit to a decay curve (see \cite{Flammia2019,Harper2019}). 
    Ref.\ \cite{Flammia2019} proves the convergence properties of such circuits to estimate the probability distribution of the average noise in the system. 
    In this case it is the average noise while running exactly the type of circuit we are interested in. 
    (b) The native two qubit gates on the device are the Sycamore gates. 
    Two Sycamore gates (and some single qubit gates) are required to implement a CX. 
    Here we show most of one round of the above circuit extract (with some randomly chosen single qubit Clifford gates) detailing the gates actually executed on the device. 
    Where a Sycamore gate connects to a qubit not shown in this extract, only a single leg is shown. }
    \label{fig:circuitextract}
\end{figure*}

An overview of the design and the related circuit extract is shown as \cref{fig:circuitextract}. The full procedure is as follows:

\begin{enumerate}
\item Choose a non-negative integer $m$.

\item For each qubit randomly decide whether to leave the state invariant or to map it to an orthogonal state in order to eliminate a nuisance model parameter (see e.g.\ Refs.~\cite{Fogarty2015,Andrews2019,Harper2018}).

\item Create $m$ applications of the following circuit:
\begin{itemize}
    \item random single qubit Clifford gate on each qubit.
    \item apply one round of the stabilizer preparation circuits.
    \item apply a random Pauli gate on each qubit.
    \item apply one round of the stabilizer preparation circuits.
\end{itemize}

Note: if $m$ is 0, then still apply the single qubit Clifford gate and the inverting gate in step 4.

\item Choose the random single qubit Clifford gate needed to return each particular qubit to the state chosen in Step 2.

\item Run the circuit created in 2--4 a number of times (in the experiment we chose 2,000), measuring all the qubits. 
Record the bit-patterns from the measurements.

\item Repeat steps 1--5 for various values of $m$, in the current experiment we chose $m \in [0, 1, 2, 3, 4, 6, 8]$. 
Ref.~\cite{Harper2018} provides guidance but in general you want the largest $m$ to to be large enough that most qubits have a marginalised survival rate that is $\approx 60\%$.

\item For each distinct $m$ chosen repeat steps 1--6 sufficient times to obtain reasonable statistics (in the reported experiment we ran each $m$ over 1,700 times, but far fewer runs are actually required for reasonable error bars).

\end{enumerate}

If the number of data qubits ($n$) is such that the $2^n$ is a tractable number:
\begin{enumerate}
    \item For each $m$, marginalise the bit-patterns to the data qubits and use these reduced bit patterns to create a $2^n$ outcome empirical probability distribution. 
    It is not a concern that not all possible bit patterns will have been observed. 
    These values are 0 in the empirical probability distribution.
    \item Walsh-Hadamard transform each of these $m$ probability distributions, forming $m$ eigenvalue vectors (each eigenvalue vector having $2^n$ entries).
    \item For each of the $m$ eigenvalue vectors, for $i$ in $2\dots2^n$, extract the $i^{\text{th}}$ eigenvalue ($\bar{\lambda_i}$) and fit to the following equation:
    \begin{align}\label{eq:decay_model}
        \bar{\lambda_i}(m) = A f_i^m 
    \end{align}
    Record each $f_i$ from the fitting procedure.
    \item Form a new eigenvalue vector $[1,f_2 \dots f_{2^n}]$. 
    If desired, Walsh-Hadamard transform this to a probability distribution and project onto the nearest point in the probability simplex.
\end{enumerate}

If $2^n$ is not a tractable number, marginalise the bit-patterns to tractable chunks and use the resultant marginal probability distributions to determine the parameters of a chosen model. 
Connecting the (empirical) marginal probabilities to the model parameters of an undirected graphical model to the model parameters is discussed at length in \cite[Sec.\ VII]{Flammia2019} and we describe how to do this parameter estimation for our models in more detail in \Cref{sec:factors}.

Measurements of the ancilla qubits can be included by expanding the circuits where it is desired to include ancilla measurement and reset. 
Some care is required, however, as such measurement and reset must only be done after two rounds of stabilizer preparation (unlike actual execution of the surface code, where measurement and reset occurs after each round of stabilizer preparation). 
Performing the experiment on a circuit without measurement and reset, and then with measurement and reset will allow the noise associated with the measurement and reset to be extracted in a similar way to interleaved benchmarking.

\section{Analysis of the data}\label{sec:analysis}
The main idea behind applying the analysis to stabilizer preparation circuits is that we can leverage the self inverting nature of a round of these circuits to implement the protocol described in \cite{Harper2017}. 
As stated in that paper the protocol is applicable to any `gate' A where $A^2$ and $APA^\dagger$ are elements of the Clifford group (here P is a Pauli). 
In this case a round of stabilizer preparation circuits, which can be thought of as a large multi-qubit gate (being composed of Clifford gates) satisfies those conditions. 
In more detail, let $\sccmacro$ represent a round of stabilizer preparation circuits, and $\Lambda_\sccmacro$ represent the noise on such a round. 
We use $\sim$, as in $\widetilde{\sccmacro}$, to represent a noisy `gate'; here a noisy implementation of a round of stabilizer preparation circuits. 
Given this we have
\begin{align}
    &\widetilde\sccmacro = \sccmacro\Lambda_\sccmacro\\
    &\sccmacro\sccmacro = I \implies \sccmacro = \sccmacro^\dagger\label{eq:invert}
\end{align}
where, in the first line, we have arbitrarily written the noise on the on the right hand side of the gate. (Nothing depends on this.)

We use $C_j$ to represent a series of $n$ single qubit gates, the suffix representing the $j^\text{th}$ draw of a set of single qubit Cliffords and $n$ being the number of qubits used to implement a round of the surface code (both data and ancilla qubits). 
$\mathcal{C}$ represents the set of all possible instantiations of $C$ and $|\mathcal{C}|$ is the size of this set. Similarly $P_j$ represents $n$ single qubit Paulis and $\mathcal{P}$ and $|\mathcal{P}|$ have analogous meanings.

In this notation a round of the protocol looks like:
\begin{equation}
    \bra{0}^{\otimes n} C_\text{inv} \dots C_{j+1} \tilde{\sccmacro} P_{j+1} \tilde{\sccmacro} C_j \tilde{\sccmacro} P_j \tilde{\sccmacro}\dots \ket{0}^{\otimes n}
\end{equation}

Three further things to note, 
1) We have ignored the noise on the rounds of Paulis and single qubit Cliffords, they will be trivial compared to a full round of the surface code/stabilizer preparation circuits and where possible the transpiler will compile them into the physically realized gates, 
2) Paulis commute through a round of the stabilizer preparation circuits into other Paulis (since it is comprised of Clifford gates), i.e.\ 
\begin{equation}
    P\sccmacro = \sccmacro P'
\end{equation} 
and 3) the sequence $P \sccmacro$, being a round of perfect Paulis followed by a perfect round of stabilizer preparation circuits, is a unitary-1 design. 
This means that conjugating a noise channel with $P \sccmacro$, averaged over all Paulis will remove the coherence in the noise, reducing it to a Pauli channel, i.e.\ 
\begin{align}
\frac{1}{|\mathcal{P}|}\sum\limits_{P\in\mathcal{P}}P\ \tilde{\sccmacro}\ P'\ \sccmacro = 
& = \frac{1}{|\mathcal{P}|}\sum\limits_{P\in\mathcal{P}}P\ \sccmacro\ \Lambda_\sccmacro\ \sccmacro\ P \label{eq:commuteP}\\
& = \frac{1}{|\mathcal{P}|}\sum\limits_{P\in\mathcal{P}}(P\ \sccmacro)\ \Lambda_\sccmacro(P\ \sccmacro)^\dagger\label{eq:invertSC} \\
& = \Lambda_{P\sccmacro}
\end{align} where in \cref{eq:commuteP} $P'$ has been chosen so that it commutes through $\sccmacro$ to become $P$ and in \cref{eq:invertSC} we have used the fact that for both the Paulis and for a round of the stabilizer preparation circuits they are their own inverse, \cref{eq:invert}. Here $\Lambda_{P\sccmacro}$, represents the Pauli-twirled noise channel on a single round of the surface code, being a Pauli channel.

With this is hand we can now analyse the proposed protocol. 
The repeating portion of the protocol is as follows:
\begin{align}
    C_{j+1} \tilde{\sccmacro}\ P_{j+1}\ \tilde{\sccmacro}\ C_j 
    =\ &C_{j+1}\ \sccmacro\ \Lambda_\sccmacro P_{j+1} \sccmacro\ \Lambda_\sccmacro\ C_j
\end{align}
Now we can let $P'$ be defined as $P_{j+1} \sccmacro = \sccmacro P'$ and define $C'_{j+1}$ as $C_{j+1} = C'_{j+1} P'$, noting that if $C_{j+1}$ was chosen randomly, $C'_{j+1}$ is also a random Clifford. 
Then
\begin{align}
    C_{j+1}\ \sccmacro\ & \Lambda_\sccmacro P_{j+1} \sccmacro\ \Lambda_\sccmacro\ C_j \\
    & = C'_{j+1}\ P'\ \sccmacro\ \Lambda_\sccmacro P_{j+1} \sccmacro\ \Lambda_\sccmacro\ C_j\\
    & = C'_{j+1}\ (P' \sccmacro)\ \Lambda_\sccmacro\  (\sccmacro P')\ \Lambda_\sccmacro\ C_j.
\end{align}
Finally we can write $C'_{j+1}$ as $C''_{j+1} C_j^\dagger$ 
\begin{align}
    &C''_{j+1} (C_{j})^\dagger\bigl[\ (P' \sccmacro)\ \Lambda_\sccmacro (\sccmacro P') \Lambda_\sccmacro\ \bigr]C_j,
\end{align}
which gives us a Pauli twirl of the noise, embedded in a Clifford twirl of the noise (when averaged over sufficient sequences). 
The exact nature of this twirl is analysed in detail in \cite{Harper2017}. 
A question might be asked as to why not just use rounds of single qubit Cliffords to locally average the noise, rather than the alternating Clifford and Pauli rounds analysed above. 
The reason is that while single qubit Cliffords can indeed be commuted through an odd number of rounds of the surface code, the resulting Clifford is no longer in the group of simultaneous single qubit Cliffords, but rather is a number of multi-qubit Clifford gates. 
Chasing this through the circuits results in a multi-qubit inversion gate at the end, which will substantially reduce the signal being measured.

\section{Locally averaged noise channel}\label{sec:locallyaveraged}

The meaning of a locally averaged noise channel is  fully explored in~\cite{Harper2019}, which builds on the proofs in~\cite{Flammia2019}. 
However, for the sake of completeness we summarise the results here. 
It is well known that averaging observed distributions over sequences of gates drawn from particular groups eliminates coherence and will average the noise in a way dependent on the groups in question. 
This is colloquially known as `twirling' the noise.
Ref.~\cite{Helsen2019} explores this in detail. 
If the group in question is the Pauli group, then the noise being twirled, when averaged over repeated randomised sequences, is the equivalent to a Pauli noise channel, i.e.\ the noise with all of its coherence removed. 
When the group is the full Clifford group, then the noise becomes a quantum depolarizing channel with the same fidelity as the original channel~\cite{Nielsen2002}, this is one of the essential ingredients of randomized benchmarking~\cite{Magesan2011}. 
Twirling the noise with single qubit Clifford gates, averages the noise, such that the number of distinct eigenvalues of an $n$-qubit noise channel are reduced from the $4^n$ distinct eigenvalues of a Pauli channel, to $2^n$ distinct values. 
It does this by averaging the Pauli noise locally. 
For example, in a two qubit channel, the Paulis $IX$, $IY$ and $IZ$ are averaged, the Paulis $XI$, $YI$ and $ZI$ are averaged and the remaining 9 two-qubit Paulis are averaged. 
In this case the 4 distinct eigenvalues ($2^2$) can be recovered by measuring the eigenvalues of the $II$, $IZ$, $ZI$ and $ZZ$ Paulis. 
This can be done by observing the appropriate measurements (which form a probability distribution over 4 outcomes) and applying a Walsh-Hadamard transform on that probability distribution. 
The Walsh-Hadamard transform is a form of Fourier transform that moves from a probability distribution (which may be sparse) to the dense eigenvalue basis of the channel and vice-versa (again see \cite{Flammia2019} for details).

\section{Parameter estimation for the factors}\label{sec:factors}

\begin{figure}[t!]
    \includegraphics[width=1\columnwidth]{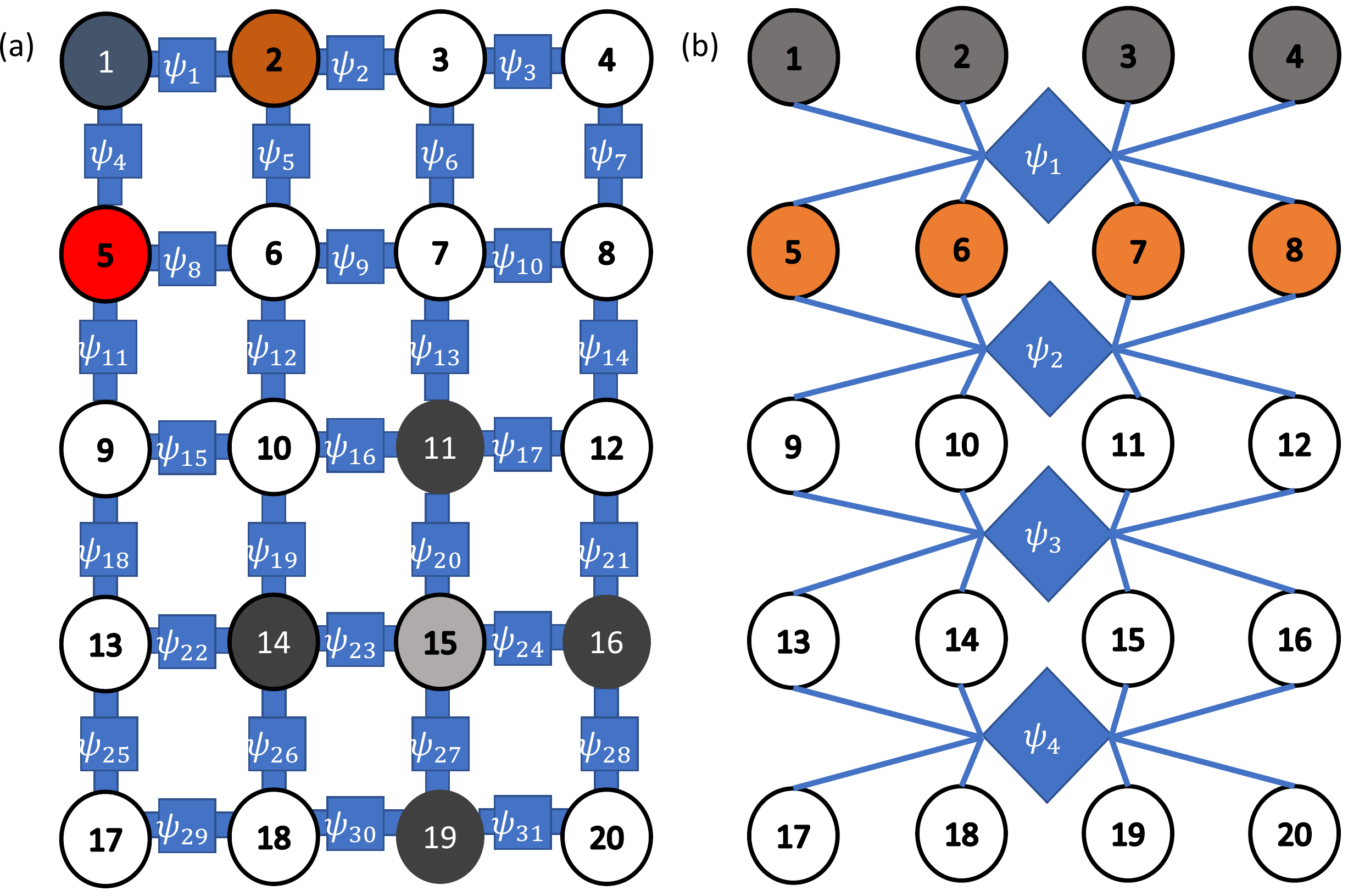}
    \caption{Here we illustrate the Markov blanket assumption built into the Ising model factor graph and the CG1D graph. As discussed in the text the errors on each qubit are independent of all other qubits if one is given the qubits that share a factor with the qubit in question. The variables are the data qubits and are associated with the circular ``variable'' nodes in the graphs, labelled 1 through 20. The square (diagonal) nodes represent the ``factor'' nodes. In the graphs we have color coded some of the qubits to aid with identification in the dicussion below. (a) 
    Qubit 1 (dark grey) has two factors ($\psi_1$ and $\psi_3$) that touch it. 
    The only other qubits these factors touch are qubits 2 and 5 (red). 
    Therefore the assumption is that the errors on qubit 1 are independent of the other qubits (excluding qubits 2 and 5) given the errors on qubits 2 and 5. 
    Similarly the Markov blanket on qubit 15 (light grey) is formed from qubits 11, 14, 16 and 19 (black). Meaning that the assumption built into the model is that the distribution on qubit 15 is independent of the other qubits, if one knows the errors on qubits 11, 14, 16 and 19.
    There are 31 factors required in this graph, each factor specifies a $2^2$ distribution, meaning the model requires 124 parameters. 
    (b) This model is referred to in the text as the Coarse-Grained 1D (CG1D) model, where we have reduced the surface code to a one-dimensional graph. 
    In this case the independence assumptions can be read from top to bottom and we have that qubits 1,2,3 and 4 (light grey) are independent of qubits 9 through to 20, given qubits 5, 6, 7 and 8 (orange). 
    There are only four factors in this model, but each factor has a distribution of $2^8$, meaning that this model requires 1024 parameters to be fully specified. 
    This is the only model with exponential scaling; it scales exponentially in the width of the grid. 
    We note that the choice between width and height factorization is arbitrary, in this case we chose width because the factors would be smaller. 
    In practice it made little difference to the results presented.
    }
    \label{fig:markovBlanket2}
\end{figure}

Here we give the intuition behind the scalable creation of a Markov network such as the Ising model using the locally averaged noise distribution. 
As mentioned previously for a device with $n$ data qubits, this is just a classical probability distribution of size $2^n$. 
Consider such a distribution over $n$ bits $\boldsymbol{x} = x_1x_2\dots x_n$, which we can write as $p(x)$. 
We are looking to efficiently represent this distribution. 
One well known approach is to model it as an undirected graphical model or Gibbs random field~\cite{Hammersley1971}. 
Here the probability distribution is associated to variables that live on the ``variable'' nodes of a factor graph, and the interactions live on the ``factor'' nodes that describe how to couple the variables, as we now detail. 

The underlying assumption behind such a graphical model is the \emph{global Markov} property, which is contained in the structure of each particular factor graph $G$. 
In such a graph (e.g.\ (a) of \cref{fig:markovBlanket2}), we associate each variable $x_j$ with a data qubit, which  we also label $x_j$. 
The ``factor'' nodes connect these variables, and are labelled by the factors, here $\psi_i$. 
These describe the correlations that are possible in $p$ according to the \emph{global Markov property}. 

To define the global Markov property, let us assume that there is a separating set $S$ of vertices with variables $\boldsymbol{x}_S$ between two sets of variables $\boldsymbol{x}_A$ and $\boldsymbol{x}_B$ (i.e.\ every path in $G$ between $A$ and $B$ goes through at least one vertex in $S$ ) then the global Markov property asserts:
\begin{equation}
    p(\boldsymbol{x}_A,\boldsymbol{x}_B|\boldsymbol{x}_S) = p(\boldsymbol{x}_A|\boldsymbol{x}_S)p(\boldsymbol{x}_B|\boldsymbol{x}_S)
\end{equation}
for every $A$ and $B$ and any separating set $S$. 
That is, that the marginal distribution over $x_A$ and $x_B$ is conditionally independent give the values on the separating subset $\boldsymbol{x}_S$. 
We illustrate this with examples in \cref{fig:markovBlanket2}. 

The Hammersley-Clifford theorem \cite{Hammersley1971} says that every strictly positive probability distribution that obeys the global Markov property for a factor graph $G$ factorises over the factor nodes of $G$, such as the graph shown in (a) of \cref{fig:markovBlanket2}. 
That is, $p(\boldsymbol{x}) = \tfrac{1}{Z}\prod_k \psi_k$, 
where $\psi_k$ are the factor functions, $Z$ is a normalization and the factors $\psi_k$ for each factor node are positive functions. 

Because the functions $\psi_k$ are positive, they can be (and often are) reparameterized as $\psi_k = \exp(H_k)$ where $H_k$ is a real-valued function (an abstract ``Hamiltonian'') that is a function only of the variables in a neighborhood of at most one factor. 
In that case, the normalizing factor $Z$ takes on the interpretation of the partition function of the Gibbs distribution of the total Hamiltonian $H = \sum_k H_k$ at an inverse temperature $\beta=1$. 
This physics language is helpful, but we gently remind the Ising model aficionados that the variables here are bits, $x \in \{0,1\}$, so some intuitions coming from spin variables can be misleading in this context.

\begin{figure}[t!]
    \includegraphics[width=1.0\columnwidth]{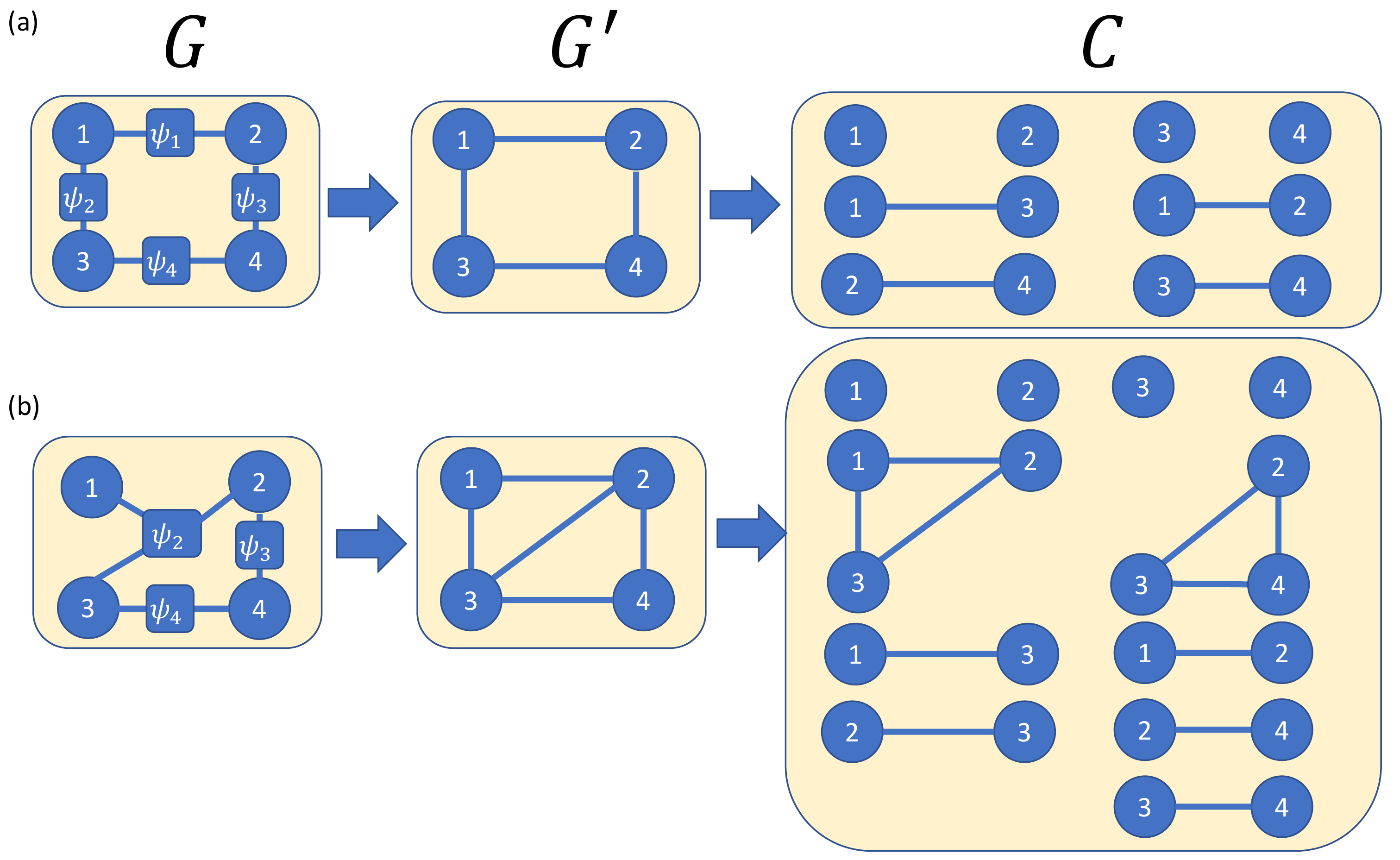}
    \caption{Here we illustrate the conversion of a factor graph into the set of relevant cliques for two different types of factor graph. In the first step each node touching a factor is connected to the nodes also touching that factor, and the factor removed. In the second we create a set of all of the induced subgraphs. 
    }
    \label{fig:factors}
\end{figure}
A particularly convenient form for the Hamiltonian $H$ is to express it as a sum of products of the variables nodes via the following construction. 
For every factor $k$ in the factor graph, we can replace the factor node and its neighborhood by the local complement in the graph. 
That is, all the variable nodes connected to the factor $k$ are connected into a complete graph (since they were previously nonadjacent), and the factor is disconnected from the graph. 
We call this modified factor graph (with the disconnected factor nodes now discarded) $G'$. 
Finally, let $C$ be the set of all cliques in $G'$, that is, the set of all complete induced subgraphs. 
We illustrate the transformation from $G$ to $G'$ and the set of cliques $C$ in \cref{fig:factors}.
With this notation, we have for some real numbers $J_b$ the general form
\begin{align}
\label{eq:cliqueHamiltonian}
    H(\boldsymbol{x}) = -\sum_{b \in C} J_b \prod_{a\in b} x_a\,.
\end{align}
The minus sign in front is purely a convention. 
Each of these terms functions as a local ``coupling'' that is local to the clique $b$ and lowers the energy by $J_b$ whenever all variables in $b$ are equal to 1. 

We can also see from this mapping that $H(\boldsymbol{0}) = -J_{\emptyset}$ is a constant ``energy shift''. 
Having already defined a partition function, this factor is redundant and could be set to zero. 
With that convention, we have the result that $p(\boldsymbol{0}) = \tfrac{1}{Z}\prod_k \exp(H(\boldsymbol{0})) = \tfrac{1}{Z}$. 
Alternatively, we could omit the partition function (setting $Z=1$) and keep $J_{\emptyset} = -\log p(\boldsymbol{0})$ as the ``free energy'' of the Gibbs ensemble. 
This second convention is sometimes useful for bookkeeping in calculations. 
Either way, this normalizing factor is likely hard to compute in general given the $J_b$. 
However, for the very small error rates in quantum computers, it might be possible to get a direct estimate of $p(\boldsymbol{0})$ from sampled data. 
Unfortunately, as the system size becomes large enough this eventually becomes exponentially small, and therefore impractical to estimate through sampling. 

Because of this, we relax our goal to specifying the entire probability distribution $p(\boldsymbol{x})$ \textit{up to normalization}, and this is now equivalent to specifying the values of the coupling constants $J_b$, one for each non-empty clique $b$ in $G'$. 
If there are at most $f$ factors with neighborhoods in $G$ of size at most $d$, then the entire probability distribution is specified by at most $f 2^d$ real numbers, which is much less than $2^n$ in the regime of interest, where $f$ is linear in $n$ and $d$ is constant. 

How might one instantiate these factors, i.e.\ learn the parameters $J_b$ given knowledge (or an assumption) about the factor graph $G$? 
First consider a set of variables $\boldsymbol{x}_r$ where $r \in C$. 
Suppose we consider an event where every variable outside of $r$, denoted $\boldsymbol{x}_{r^c}$, is zero. 
Then any product of $x_a$ that involves a variable outside of $r$ vanishes. 
That is, if $b \not\subseteq r$, then $\prod_{a\in b} x_a$ contains at least one 0.
The log-probability then greatly simplifies to
\begin{align}
    -\log p(\boldsymbol{x}_r,\boldsymbol{0}_{r^c}) &= \sum_{b\in C} J_b \prod_{a\in b} x_a \biggr|_{\boldsymbol{x}_{r^c}=\boldsymbol{0}}\\
    &= \sum_{b\subseteq r} J_b \prod_{a\in b} x_a\,.\label{eq:factorsum}
\end{align}
This gives a set of linear equations that relate the log-probabilities with the coupling constants when we enumerate over the $2^{|r|}$ bit-strings $\boldsymbol{x}_r$. 
To illustrate, suppose that $|r| = 2$, and write $p(x_1x_2,\boldsymbol{0})$ for $p(\boldsymbol{x}_{r},\boldsymbol{0}_{r^c})$. 
Then we would have the four equations
\begin{align}
\begin{array}{llllll}
-\log p(00,\boldsymbol{0}) & = & J_\emptyset\\
-\log p(01,\boldsymbol{0}) & = & J_\emptyset & + J_{01}\\
-\log p(10,\boldsymbol{0}) & = & J_\emptyset &          & + J_{10}\\
-\log p(11,\boldsymbol{0}) & = & J_\emptyset & + J_{01} & + J_{10} & + J_{11} \,.
\end{array}
\end{align}
Here we have adopted the convention that the label for the subset $b$ is just the bit string with 1 as the indicator function for set membership. 

We now make two important simplifications to this problem. 
The probabilities $p(\boldsymbol{x}_r,\boldsymbol{0}_{r^c})$ will be extremely small in general, even for independent noise. 
These cannot be efficiently learned from sampling, even if $|r|$ is a small constant, since they decay exponentially in $n$, the total number of bits. 
We therefore notice that we can use our relaxed goal of learning modulo the normalization to add $\log p(\boldsymbol{0}_{r_c})$ to both sides. 
By Bayes' theorem we have $p(\boldsymbol{x}_r,\boldsymbol{0}_{r_c}) = p(\boldsymbol{x}_r|\boldsymbol{0}_{r_c}) p(\boldsymbol{0}_{r_c})$, so the left hand side becomes a conditional probability, $-\log p(\boldsymbol{x}_r|\boldsymbol{0}_{r_c})$. 
Let us write $\partial r$ for the set of nodes in the neighborhood of~$r$. 
Then $\partial r$ forms a separating set, and using the global Markov property implies that the left hand side simplifies further to $-\log p(\boldsymbol{x}_r|\boldsymbol{0}_{\partial r})$. 
This boundary set $\partial r$ will have bounded size if each of the factors have bounded size, and if each variable participates in a bounded number of factors. 
Thus, the left hand side becomes conditional probability distributions over a bounded number of variables, for which an empirical estimation can be done with reasonable efficiency. 
To balance the right hand side, the term $J_\emptyset = -\log p(\boldsymbol{0})$ transforms to $J_{\boldsymbol{0}_r} := -\log p(\boldsymbol{0}_r|\boldsymbol{0}_{\partial r})$. 
Note that $J_{\boldsymbol{0}_r}$ does not appear anywhere in our Hamiltonian \cref{eq:cliqueHamiltonian}, so our estimate of this is in some sense a nuisance parameter. 

To summarize, we now have an equivalent system of equations where the quantities on the left hand side allow reasonable empirical estimates, given in illustration for $|r|=2$ by
\begin{align}
\begin{array}{llllll}
-\log p(00|\boldsymbol{0}_{\partial r}) & = & J_{00}\\
-\log p(01|\boldsymbol{0}_{\partial r}) & = & J_{00} & + J_{01}\\
-\log p(10|\boldsymbol{0}_{\partial r}) & = & J_{00} &          & + J_{10}\\
-\log p(11|\boldsymbol{0}_{\partial r}) & = & J_{00} & + J_{01} & + J_{10} & + J_{11} \,.
\end{array}
\end{align}
We have only to solve these equations for the $J_{\boldsymbol{x}_r}$ for $\boldsymbol{x}_r \not= \boldsymbol{0}_r$. 

With this binary ordering, we define a matrix 
\begin{align}
    A = \begin{pmatrix}
            1&0\\
            1&1
        \end{pmatrix}^{\otimes |r|}\,,
\end{align}
so that in general our equations have the simple form
\begin{align}
    -\log p(\boldsymbol{x}_r|\boldsymbol{0}_{\partial r}) = A J_{\boldsymbol{x}_r}\,.
\end{align}
Note that $A$ is invertible, with inverse
\begin{align}
    A^{-1} = \begin{pmatrix}
            1&0\\
            -1&1
        \end{pmatrix}^{\otimes |r|}\,.
\end{align}
Therefore, a general solution for the couplings in the subset $r$ is given by
\begin{align}
    J_{\boldsymbol{x}_r} = -A^{-1} \bigl(\log p(\boldsymbol{x}_r|\boldsymbol{0}_{\partial r})\bigr) \,.
\end{align}
By replacing the conditional probabilities by empirical estimates of conditional probabilities, we directly obtain empirical estimates for the coupling constants in any given clique $r$. 

A word on the theoretical precision of this estimation procedure. 
First, the condition number of $A$ is $\bigl(\tfrac{3+\sqrt{5}}{2}\bigr)^{|r|} \approx 2.61^{|r|}$, so the linear inversion becomes poorly conditioned for large $|r|$. 
Second, a useful estimate of $-\log p$ requires an estimate of $p$ to \textit{relative} precision $p (1\pm \epsilon)$ in order for the error to remain tolerable after the logarithm. 
This is expensive when $p$ is small. 
Despite these challenging theoretical limitations, we empirically observe quite favorable performance of this procedure. 
We leave open to future work the search for even better estimators with improved guarantees. 

The estimation problem substantially simplifies in the case where the factor graph is one-dimensional or, as in the case of \cref{fig:markovBlanket2}(b), is 1D up to some coarse graining. 
In that case, one only needs to estimate the marginals (as opposed to conditional marginals) on nearest-neighbors along the 1D geometry. 
This follows from a repeated application of Bayes' theorem together with the global Markov property, as we now show. 
Consider a chain of variables $\boldsymbol{x}_{1:n} = x_1x_2\ldots x_n$, which need not be binary as they may have arisen from some coarse graining. 
By applying Bayes' theorem followed by the global Markov property, we have
\begin{align*}
    p(\boldsymbol{x}_{1:n}) &= p(x_1|\boldsymbol{x}_{2:n}) p(\boldsymbol{x}_{2:n}) \\
    & = p(x_1|x_2) p(\boldsymbol{x}_{2:n}) \\
    & = \frac{p(x_1,x_2)}{p(x_2)} p(\boldsymbol{x}_{2:n}) \,.
\end{align*}
We can now recursively apply this idea to the rest of the chain starting at $x_2$ and we find
\begin{align*}
    p(\boldsymbol{x}_{1:n}) = \prod_{k=1}^{n-1}\frac{p(x_k,x_{k+1})}{p(x_{k+1})} \times p(x_{n}) \,.
\end{align*}
In this way, the case of a 1D structure allows some additional efficiency in estimating the model parameters since they can now be reconstructed from (estimates of) only nearest neighbor marginals.

In the case of the type of model proposed for the surface code, calculation of any particular factor will involve a maximum of 8 qubits. 
As previously discussed, for locally averaged noise, the joint distributions for all possible 8 qubit groupings can be ascertained with a single `RB-style' experiment. 
While this calculation is predicated on the fact that the underlying probability distribution does indeed obey the global Markov global property (which will not necessarily be the case for the actual noise in the device), ref~\cite{Abbeel2012} shows (and our numerics indicate that in an experimental setting) the process degrades gracefully, i.e.\ the calculated approximation will indeed approximate the underlying probability distribution. 
Note that although we can populate the factors, calculating the partition function (needed to normalise the graph so that it forms a probability distribution), still requires a calculation that scales exponentially. 
This is, however, not a problem when we don't need to know the probabilities but just the relative probabilities. 
Importantly, this means we can sample from the model in a scalable fashion in most cases of interest.

\section{Covariance bounds}\label{sec:covbound}

Suppose we have two finite probability distributions $p$ and $q$.
Now consider a random variable $X$, which is a real vector that is distributed according to either $p$ or $q$.
Let the covariance matrix of $X$ and the mean of $X$ with respect to the distribution $p$ be 
\begin{align}
	\Sigma_p &= \mathbb{E}_p\bigl[(X-\mu_p) (X-\mu_p)^T\bigr] \nonumber\\&= \mathbb{E}_p\bigl[X X^T\bigr] - \mu_p\mu_p^T \,,\quad \\&\quad\mu_p = \mathbb{E}_p[X]\,.
\end{align}
We would like to prove an upper bound on the operator norm of the difference, $\| \Sigma_p - \Sigma_q \|$, in terms of a distance between $p$ and $q$ and some notion of the length of the $X_j$.
We have the following theorem.
\begin{theorem}
Let $C = \mathrm{conv}\{X_j |  j \in S\}$ be the convex hull of the vectors $X_j$ in $S = \mathrm{supp}(p-q)$, the support of $p-q$.
Let $D = \mathrm{diam}(C)$ be the diameter of $C$ in the Euclidean distance and let $T = T(p,q) = \tfrac{1}{2}\|p-q\|_1$ be the statistical (1-norm) distance between $p$ and $q$.
Then we have 
\begin{align}
	\| \Sigma_p - \Sigma_q \| \le T\bigl(D^2 -\tfrac{1}{4}\|\mu_p-\mu_q\|^2\bigr) \,.
\end{align}
Moreover, the scaling $TD^2$ cannot be sharpened by any constant factor.
\end{theorem}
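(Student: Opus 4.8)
The plan is to collapse the matrix inequality to a single scalar geometric fact and then settle that fact with the parallelogram law. First I would rewrite $\Sigma_p-\Sigma_q$ in terms of the signed measure $\delta = p-q$ alone. Set $d=\mu_p-\mu_q$ and $\bar\mu=\tfrac12(\mu_p+\mu_q)$. Expanding $\Sigma_p = \mathbb{E}_p[XX^T]-\mu_p\mu_p^T$ (and similarly for $q$), using the identity $\mu_p\mu_p^T-\mu_q\mu_q^T = \bar\mu d^T + d\bar\mu^T$, and using $\sum_j\delta_j=0$ together with $\sum_j\delta_j X_j = d$, a short computation yields
\begin{equation*}
\Sigma_p-\Sigma_q \;=\; \sum_{j\in S}\delta_j\,(X_j-\bar\mu)(X_j-\bar\mu)^T \,,
\end{equation*}
the sum restricting to $S$ because $\delta_j=0$ elsewhere. (A cleaner way to see it: since covariance is translation invariant one may first shift so that $\bar\mu=0$, whereupon $\mu_p\mu_p^T=\mu_q\mu_q^T$ and only the second-moment difference survives.)

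Next I would bound the operator norm of the right-hand side. Writing $\delta_j=\delta_j^+-\delta_j^-$ with $\sum_j\delta_j^+=\sum_j\delta_j^-=T$, and noting that each quadratic form $\bigl(v\cdot(X_j-\bar\mu)\bigr)^2$ is nonnegative, for any unit vector $v$ the number $v^T(\Sigma_p-\Sigma_q)v$ is a difference of two quantities, each lying in the interval $[0,\,T\max_{j\in S}\|X_j-\bar\mu\|^2]$. Since $\Sigma_p-\Sigma_q$ is symmetric this gives
\begin{equation*}
\|\Sigma_p-\Sigma_q\| \;\le\; T\,\max_{j\in S}\|X_j-\bar\mu\|^2 \,,
\end{equation*}
so the theorem reduces to the purely geometric claim $\max_{j\in S}\|X_j-\bar\mu\|^2 \le D^2-\tfrac14\|d\|^2$.

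For that claim, let $j_\star$ achieve the maximum. Since $\mu_p$ and $\mu_q$ lie in $C$, we have $\|X_{j_\star}-\mu_p\|\le D$ and $\|X_{j_\star}-\mu_q\|\le D$. Because $\mu_p=\bar\mu+\tfrac12 d$ and $\mu_q=\bar\mu-\tfrac12 d$, the parallelogram law gives $\|X_{j_\star}-\mu_p\|^2+\|X_{j_\star}-\mu_q\|^2 = 2\|X_{j_\star}-\bar\mu\|^2+\tfrac12\|d\|^2$, and the left-hand side is at most $2D^2$; rearranging is exactly the desired inequality, and combining with the previous display proves the bound. For optimality of the $TD^2$ scaling I would take $p=\delta_x$ and $q=(1-\epsilon)\delta_x+\epsilon\delta_y$ with $\|x-y\|=D$: then $T=\epsilon$, $\Sigma_p=0$, and a direct computation gives $\Sigma_q=\epsilon(1-\epsilon)(x-y)(x-y)^T$, so $\|\Sigma_p-\Sigma_q\| = (1-\epsilon)\,TD^2$, whose ratio to $TD^2$ tends to $1$ as $\epsilon\to0$, ruling out any smaller leading constant.

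The part carrying the real content is the interplay between the identity and the geometry: once $\Sigma_p-\Sigma_q$ is written through $\bar\mu$, everything reduces to ``how far can $\bar\mu$ be from an extreme point of $C$'', and the subtracted $\tfrac14\|d\|^2$ is exactly what the parallelogram law extracts from the stronger fact that each of $\mu_p$ and $\mu_q$ (not merely their midpoint) lies inside $C$. The algebraic identity itself is routine, but the bookkeeping of the mean-outer-product terms is the easiest place to slip, so I would double-check it via the $\bar\mu=0$ translation shortcut.
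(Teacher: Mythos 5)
Your proof is correct and follows essentially the same route as the paper's: the same centering identity $\Sigma_p-\Sigma_q=\sum_{j}(p_j-q_j)(X_j-\bar\mu)(X_j-\bar\mu)^T$ with $\bar\mu=\tfrac12(\mu_p+\mu_q)$, the same split into positive and negative parts yielding $\|\Sigma_p-\Sigma_q\|\le T\max_{j\in S}\|X_j-\bar\mu\|^2$, the same parallelogram-law (median-of-triangle) refinement producing the $-\tfrac14\|\mu_p-\mu_q\|^2$ term, and the same two-point example establishing tightness of the $TD^2$ scaling. The differences are purely presentational (quadratic forms in place of the $M_\pm$ operator-norm bound, the parallelogram law stated explicitly), and your unproved assertion that $\mu_p$ and $\mu_q$ lie in $C$ mirrors the same implicit step in the paper's own argument.
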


\begin{proof}
We introduce some new variables,
\begin{align}
	m &= \frac{p+q}{2}\,,\quad \\\mu &= \mathbb{E}_m[X] = \frac{\mu_p+\mu_q}{2}\,, \\
	\delta &= \frac{\mu_p-\mu_q}{2}\,,\quad \text{and}\\
    Y_j &= X_j - \mu\,,
\end{align}
so that $\mu_p = \mu+\delta$, $\mu_q = \mu-\delta$, and $\mathbb{E}_m[Y] = 0$.
In terms of these variables, we have
\begin{align}
	&\Sigma_p - \Sigma_q \nonumber\\
	&= \mathbb{E}_p\bigl[(Y-\delta) (Y-\delta)^T\bigr] - \mathbb{E}_q\bigl[(Y+\delta) (Y+\delta)^T\bigr] \nonumber\\
	&= \sum_j (p_j-q_j) Y_j^{} Y_j^T - \sum_j (p_j + q_j) (\delta Y_j^T +Y_j^{}\delta^T) \nonumber\\&\qquad+ \sum_j (p_j-q_j) \delta\delta^T\nonumber\\
	& = \sum_j (p_j-q_j) Y_j^{} Y_j^T \,.
\end{align}
In the last step, the middle terms vanish because $\mathbb{E}_m[Y] = 0$, and the last term vanishes identically because $\sum_j (p_j-q_j) = 0$.
Now we split the sum into two sets given by
\begin{align}
	j_+ = \{j:p_j>q_j\} \,,\quad j_- = \{j:p_j<q_j\} \,,
\end{align}
so that taking norms of both sides we have
\begin{align}
	&\|\Sigma_p - \Sigma_q \| \nonumber \\
        & = \biggl\| \sum_j (p_j-q_j) Y_j^{} Y_j^T \biggr\| \nonumber\\
	&= \biggl\| \sum_{j\in j_+} |p_j-q_j| Y_j^{} Y_j^T - \sum_{j\in j_-} |p_j-q_j| Y_j^{} Y_j^T \biggr\| \nonumber\\
	&= \|M_+ - M_-\|\,.
\end{align}
In the last equality, the matrices $M_\pm$ that we introduce are both positive semidefinite, $M_\pm \ge 0$, since they are positive sums of positive semidefinite matrices.
But for any two positive semidefinite matrices $M_\pm$, we have
\begin{align}
\label{eq:varopnormineq}
	&\| M_+ - M_-\| \nonumber\\&= \max_{v:\|v\|=1} |v^T(M_+ - M_-)v| \le \max\{\|M_+\| , \|M_-\|\}\,.
\end{align}
We will assume without loss of generality that $\|M_+\| \ge \|M_-\|$.
Because both $p$ and $q$ are probability distributions, we have
\begin{align}
\label{eq:1normplusminus}
	\sum_{j_+} |p_j-q_j| = \sum_{j_-} |p_j-q_j| = \frac{1}{2}\|p-q\|_1 = T(p,q)\,.
\end{align}
Using the inequality from Eq.~\ref{eq:varopnormineq}, the triangle inequality, H\"{o}lder's inequality, and Eq.~\ref{eq:1normplusminus} we have
\begin{align}
	\| \Sigma_p - \Sigma_q \| & \le \max\{\|M_+\| , \|M_-\|\} \\
	& \le \sum_{j\in j_+} |p_j-q_j| \, \bigl\|Y_j^{} Y_j^T \bigr\| \\
	& \le \sum_{j\in j_+} |p_j-q_j| \, \max_{k\in S} \bigl\|Y_k^{} Y_k^T \bigr\| \\
	& = T \, \max_{k\in S} \bigl\|X_k -\mu \bigr\|^2 \,.
\end{align}
Geometrically, $\max_{k\in S} \bigl\|X_k -\mu \bigr\|$ is the maximum distance between the vectors $X_k$ and some point $\mu$ which must be inside $C$, their convex hull.
This distance is clearly bounded by $D$, the diameter of $C$, and leads to the weaker bound of $\| \Sigma_p - \Sigma_q \|\le D^2 T$.

To get the sharper statement of the theorem, we need one more observation. 
Define $X^\star = X_{k^{\star}}$, where $k^\star = \arg\max_{k\in S} \bigl\|X_k -\mu \bigr\|$ is the index of any maximizer (if there is more than one, we choose one arbitrarily).
Because the diameter of $C$ is $D$, we must also have $r_p := \|X^\star -\mu_p \bigr\|\le D$ and $r_q := \|X^\star -\mu_q \bigr\|\le D$.
But the distance $\|\mu_p - \mu_q\|$ is fixed, so the distance of interest, $r := \|X^\star - \mu\|$, cannot be as large as $D$ unless $p=q$ as the distance to the midpoint will generally be shorter.
The maximizing distance will in fact be the median of the triangle formed by the three points $\mu_p, \mu_q, X^\star$.
Elementary geometry shows that this squared distance is then bounded as follows
\begin{align}
	&r^2 = \|X^\star - \mu\|^2 = \nonumber\\
	&\frac{2r_p^2+2r_q^2-\|\mu_p-\mu_q\|^2}{4} \le D^2-\frac{\|\mu_p-\mu_q\|^2}{4}\,.
\end{align}
This completes the proof of the stronger inequality.

To show that the upper bound cannot be strengthened to $c\, T D^2$ for any constant $c < 1$, consider the following example. 
For the probability distributions $p=(1,0)$, $q=(1-a,a)$ (with $1\ge a\ge0$), we have
\begin{align}
	T = \tfrac{1}{2}\|p-q\|_1 = a\,.
\end{align}
If we consider the scalar random variable $X_0=0$, $X_1=D$, then the covariance with respect to each probability distribution is just the variance.
Only $q$ yields a nontrivial variance given by $D^2 a (1-a)$, so the bound is indeed tight to leading order in $a$.
\end{proof}

In the case of reconstructing the ($n\times n$) Pauli covariance matrix, we have guaranteed convergence in the 1-norm between $p$ and an estimate $q$ of $p$, and the vectors $X_j$ have diameter $D = \sqrt{n}$, where $n$ is the number of qubits.
So the covariance estimate that we report using the initial density estimate is consistent, though it may have some bias.
Moreover, the scaling of the upper bound proven in the theorem is most likely an artifact in that case, for the following reason.
Both $p_j$ and $q_j$ get smaller when $j$ labels higher-weight errors, so $\|X_j\|$ gets suppressed for those values in the sum.
The bound is worst-case and is not sensitive to this.
It only takes a $1/w$ dependence with the weight to remove the scaling with $n$, so it is very likely that the $n$ dependence is not there in practice.

Finally we note it is not possible to upper bound the total variation distance between two finite probability distributions on bit strings by a function only of the covariance matrix of the random variable.

Consider $q$ being the uniform distribution on all strings of length $n = 2^k$ and $p$ being uniform on the $2 n$ strings that come from the rows of a Hadamard matrix and its complement. 
The TVD between $p$ and $q$ is easily computed to be $1 - n 2^{1-n}$. 
But the covariance between $p$ and $q$ is identically 0. 
In fact, the first and second moments are both zero.

\begin{table}[ht!]
\begin{tabular}{p{1cm}p{2cm}p{1.5cm}p{1.5cm}p{2cm}}
\toprule
Data Qubit&Alternative Qubits& CE \mbox{Alternative} & CE Ising & Ising Qubits \\ \midrule
6&[5, 7, 10, 11]&0.8166&0.8167&[5, 7, 2, 10]\\
13&[9, 14, 18]&0.874&0.875&[9, 14, 17]\\
18&[13, 14, 17]&0.797&0.799&[14, 17, 19]\\
\bottomrule
\end{tabular}
\caption{For each of the data qubits in the device (labeled in the order show in \cref{fig:markovBlanket}), the qubits that are linked to the qubit in question under the Ising model were read from the graph. 
All possible combinations of the same number of data qubits were assessed using conditional entropy (CE) and the qubits with the lowest value were found. 
Other than as listed above they were the same as the Ising model ansatz. 
We note that even in those cases, all but one of the qubits in each `optimal' blanket were the same as in the Ising model and the difference in the CE between the optimal qubits and the Ising ansatz were, in all cases, less than 0.5\%. 
Using a bootstrap methodology to ascertain error bars, shows us the 95\% confidence level in all cases is less than $\pm2\times 10^{-5}$.}\label{fig:cmi}
\end{table}

\section{Justifying the Ising Model with data}\label{sec:ce}

As noted in the text for this specific device the Ising model is unable to capture some of the longer range correlated errors that appear to exist in the device. 
This begs the question: is there another type of two-factor graph that we can draw (say, by way of example) connecting qubits 1 and 9 rather than 1 and 2) that might better represent the underlying probability distribution?

Given that we have access to the full distribution this is, in fact, a question we can attempt to  answer.

For (classical) random variables $\mathcal{X}, \mathcal{Y}$, the \textit{conditional entropy} (CE) of $\mathcal{Y}$ given $\mathcal{X}$ is defined as:
\begin{equation}
    H(\mathcal{Y}|\mathcal{X})=-\sum\limits_{x\in\mathcal{X},y\in\mathcal{Y}}p(x,y)\log\frac{p(x,y)}{p(x)}
\end{equation}
and quantifies the amount of information needed to describe the outcome of random variable $\mathcal{Y}$ given that we have knowledge of the value of $\mathcal{X}$. We can use this measure to determine the optimal qubits to link together in our graph. The lower the conditional entropy of $\mathcal{Y}$, the `stronger' the link to $\mathcal{X}$. If we specify a maximum number of links ($n$) we will entertain to any particular qubit ($q$). The for each candidate qubit ($q$) we can search to see which group of $n$ qubits will give us the lowest conditional entropy. 
Those are the qubits we might consider linking in our graph. 

If we decide to limit the number of `linked' qubits to be the same as in the Ising model (e.g.\ for a corner qubit we want to draw a link to two other qubits, for a middle qubit we will allow it to link to four other qubits) then a brute-force search for the best qubits to link together is quite tractable in the current device. 

Carrying out this procedure we find that the extracted data confirms that the Ising model links are the strongest links with three exceptions which are set out in \cref{fig:cmi}. 
Even in those exceptional cases, the `optimal' links are extremely close to the Ising model links, differing in all cases by only one qubit and a conditional entropy of less than 0.5\% from the Ising ansatz conditional entropy. 
This process was repeated 100 times with different distributions, bootstrapped from the original data (which gives us the the confidence level described in the table). 
The results were consistent, providing a large degree of confidence that the Ising model ansatz based on qubit physical location is a sensible ansatz to use.

\section{Reconstructed noise channels}\label{sec:theoretical}

As discussed in the main text, the Walsh-Hadamard transform can be used to move between the observed error probability domain and the Pauli channel eigenvalues (or, as here, the locally-averaged channel eigenvalues). 
Using this transform it is trivial to calculate the effect of multiple applications of the channel, even when one starts from an observed probability distribution. 
For instance two applications of the channel is calculated by squaring the eigenvalues. 
It is exactly this property that is used in fitting decay curves in randomised benchmarking.
 
Similarly one can easily model the effect of `partial' applications of the channel, e.g.\ by taking the square root of the eigenvalues one can calculate a channel that if applied twice would result in the original observed channel. 
This is most easily seen if one recalls that a super-operator representation of a Pauli channel (in a Pauli basis) is a diagonal matrix, with the diagonal elements being the Pauli eigenvalues of the channel. 
Taking the square root of these diagonal elements results in a channel that when applied twice (multiplied by itself) results in the original channel.

We can make this more precise if one assumes the noise is generated by a continuous time process. 
In that case, let $p$ be the probability distribution measured by our protocol and $W$ the Walsh-Hadamard transform, then combining \cref{eq:lt} and \cref{eq:pt} we have: 
\begin{equation}
\label{eq:continuouschannel}
p(t)=W^{-1} \exp(t \log(W p))
\end{equation}

\begin{figure*}[ht!]
\begin{tikzpicture}
    
\node[inner sep=0pt] at (-0,5) {\includegraphics[width=1\textwidth]{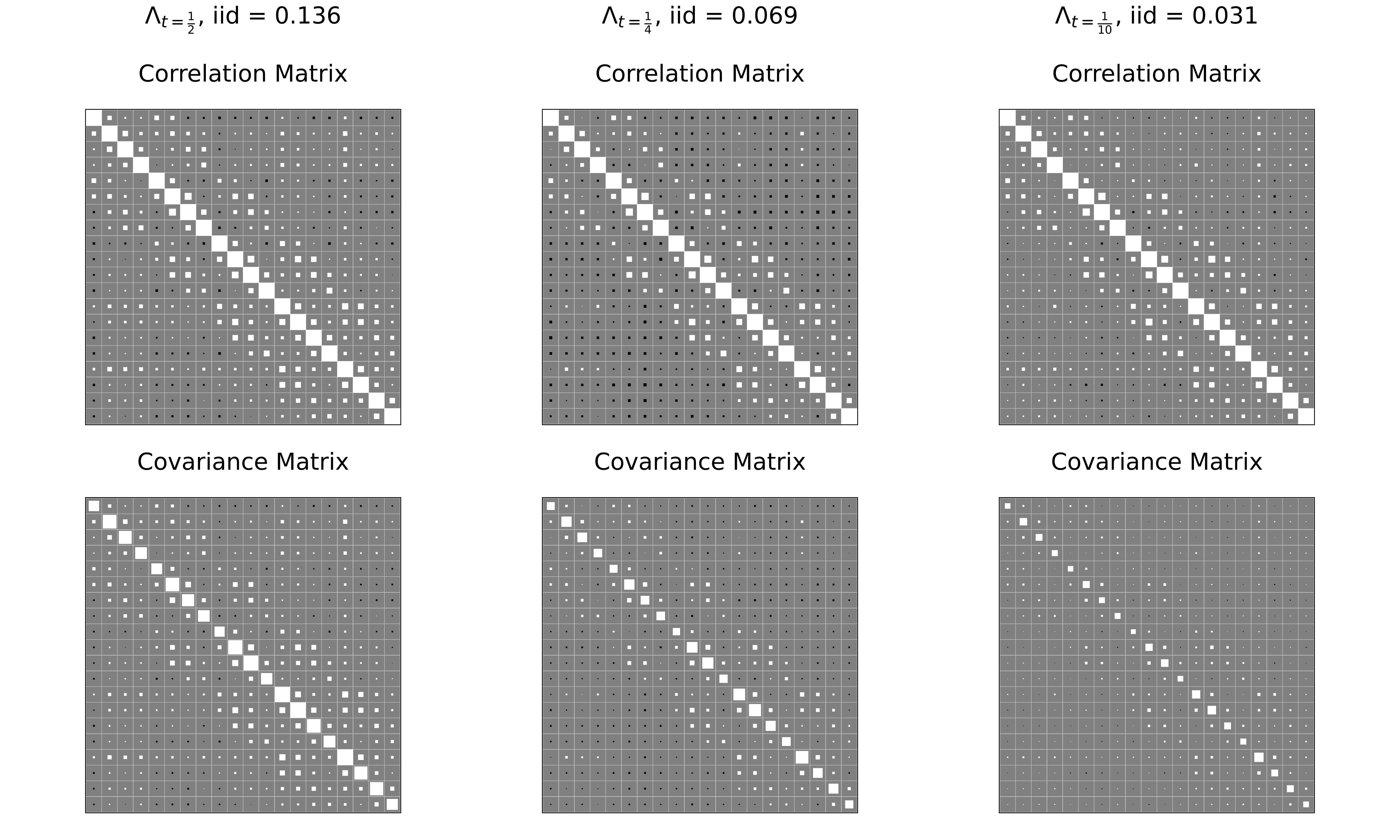}};
\node at (-8,10) {\textbf{(a)}};
\node[inner sep=0pt] at (-0,-4) {\includegraphics[width=0.8\textwidth]{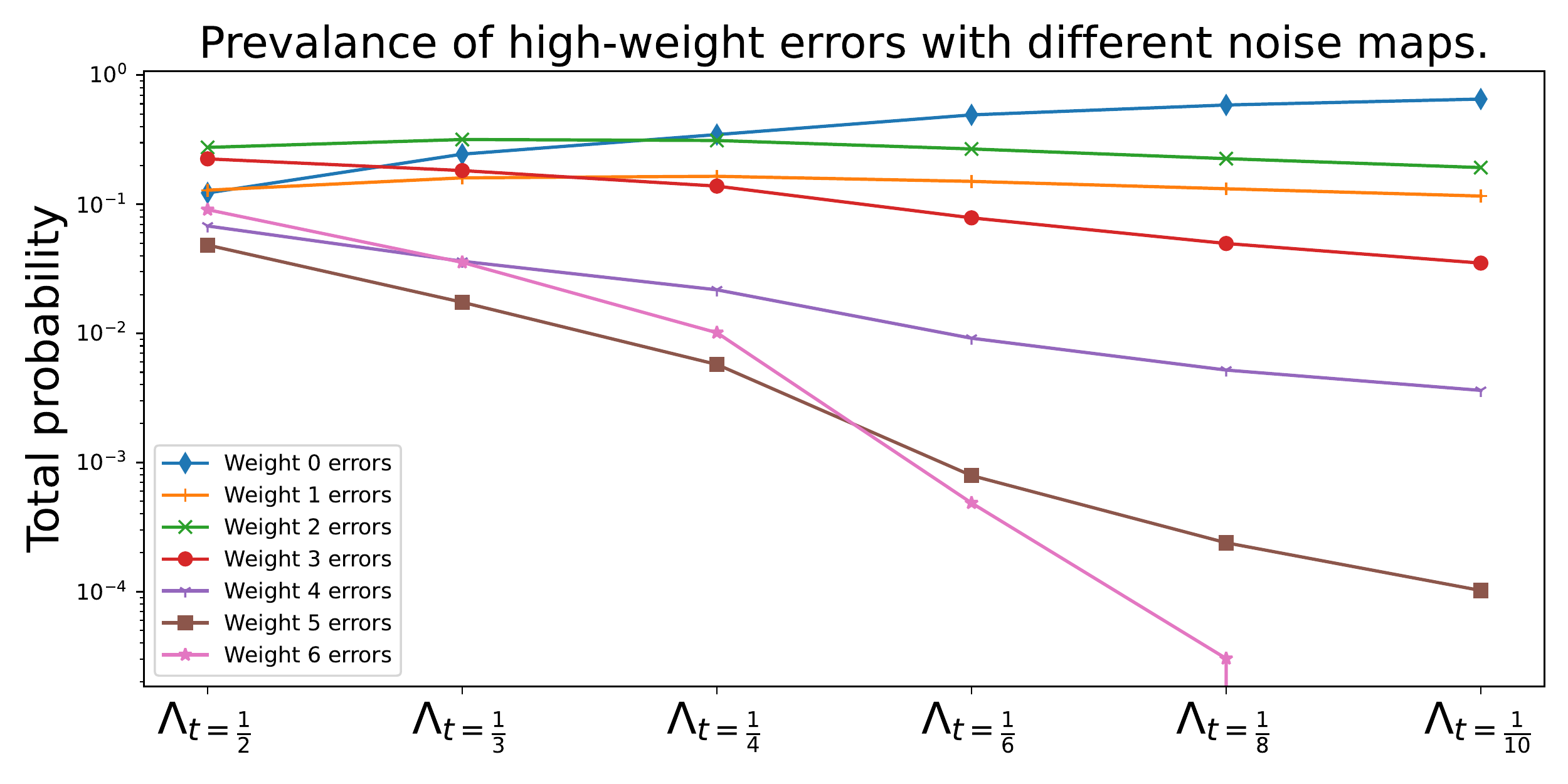}};
\node at (-8,-1) {\textbf{(b)}};

\end{tikzpicture}

\caption{(a) Hinton plots for the correlation and covariance matrices for some example reconstructed noise channels (see \cref{sec:theoretical}). 
The qubits are set out as Data Qubit 0 $\rightarrow$ Data Qubit 19. 
The value of the correlation/covariance is related to the area of the square. For the correlation matrices a full square has the value of 1. 
For instance, all squares on a diagonal are 1 (a full white square), as each qubit is correlated with itself. 
White represents positive correlations, black negative ones. For the covariance matrices, the scale is kept constant at a value of a full square = 0.18 (which is the largest value in all three example plots).
The first example $\Lambda_{t=\frac{1}{2}}$ shows the correlation matrix for the noise channel representing a single round of the ancilla preparation circuits. 
The title for each diagram also shows the average of the marginalised data qubit error rates (the number used in the simple iid model). 
The other diagrams show the same data for different values of $t$, e.g.\ the $\Lambda_{t=\frac{1}{4}}$ diagram represents the correlation matrix for the reconstructed channel that if applied twice would result in the same channel as $\Lambda_{t=\frac{1}{2}}$. 
As can be seen while the average qubit error rate decreases as $t\rightarrow 0$, the correlation pattern of each channel remains reasonably consistent and importantly retains the interesting noise features of the device. As might be expected the raw value of the covariance between the qubits decreases as the fidelity of the channel increases, but the patterns remain consistent.
(b) the total probability for errors with specific 'weights', i.e.\ affecting the specified number of qubits. As can be seen as $t$ decreases, there are fewer high weight errors. 
This is as anticipated, since as these higher fidelity maps are applied multiple times, lower weight errors can combine to form higher weight errors. 
The noise on the device is such, though, that errors with weights $\geq 3$ are still prevalent even on the least noisy maps.\label{fig:reconstructedCorrelations}}
\end{figure*}

As noted in the main text not every noise channel, including Pauli noise channels, is divisible in this fashion. 
The issue is that the object constructed in the manner of \cref{eq:continuouschannel} might not correspond to a completely positive trace preserving map. 
Indeed in practice, especially given the size of the channels and numeric imprecision, the raw transformed channel tended to have a number of small negative `probabilities'. 
As a practical matter we then took the step of projecting the resulting distributions onto the nearest probability simplex. 
It should be restated that the purpose of this exercise is not to generate channels that represented something \emph{achievable} in the device, but rather to `construct' counterfactual theoretical channels that have lower error rates than the observed distribution but that also, so far as possible, retain all the `interesting' features and correlations of the original measured noise. 
For instance \cref{fig:reconstructedCorrelations}(a) shows that these theoretical channels contain two-body correlations that appear broadly similar to the original observed channel and to each other. 

Finally we would anticipate that the higher fidelity noise maps (which represent partial applications of the observed noise map) would have fewer high-weight errors as a percentage of the total probability `budget'. 
This is because as they are applied multiple times, smaller weight errors can combine into larger weight errors, e.g.\ an $IX$ error and $XI$ error will combine to an $XX$ error. 
\Cref{fig:reconstructedCorrelations}(b) shows the prevalence of errors by weight in the various noise maps, confirming this behaviour.

Where our observed channel might not be below threshold (i.e.\ the logical error rate is above physical error rate), this will allow one to explore where (or if) the threshold is crossed, given the \emph{characteristics} of the noise.

\end{document}